\documentclass[letterpaper, 12pt, a4paper, 
]{article}

\usepackage[dvipsnames]{xcolor}
\usepackage[setpagesize=false, colorlinks=true, 
linkcolor= Plum, 
citecolor= PineGreen,%
linktocpage=true,%
]{hyperref}
\usepackage[pdftex]{graphicx}

\usepackage{amsmath, amssymb, amsthm} 				  
\usepackage{bm,centernot}
\usepackage[T1]{fontenc} 
\usepackage{newpxtext, newpxmath}
\usepackage[]{eucal} 
\usepackage[scr=rsfso]{mathalpha}

\usepackage[nameinlink]{cleveref}
\usepackage{aliascnt}
\usepackage{booktabs, setspace, sgame,subcaption}
\usepackage[pdftex, margin=1in, 
]{geometry}
\usepackage[]{natbib} 					
\usepackage[nottoc]{tocbibind} 		
\usepackage[inline]{enumitem}			\setlist{nosep} 

\newcommand{\dx}[1]{ \mathsf{d} \left( #1 \right)}			
\newcommand{\hx}[1]{ \mathsf{h} \left( #1 \right)}			\newcommand{\xd}[2]{{\mathsf{x} \left( #1 , #2 \right)}}
\newcommand{\gs}{\mathcal{X}} 										
\newcommand{\COP}{F^\star} 									\newcommand{\cop}{f^\star}
\newcommand{\set}{\mathbb{X}} 								\newcommand{\op}{\mathbf}
\newcommand{\pdom}{\mathscr{P}_D} 							
 						\newcommand{\cupch}{C_H}
\newcommand{\Menus}{2^{\gs}} 								\newcommand{\InMenus}{\subseteq \gs}
\newcommand{\menu}{X} 										\newcommand{\Ac}{\mathrm{Ac}}

\newcommand{\RH}[2]{ R_H \left( \left\{ {#1}_{1}, \dots, {#1}_{#2}\right\} \right)}
\newcommand{\CHop}[1]{C_H \left( \set (\op{#1})\right)}
\newcommand{\RHop}[1]{R_H \left( \set (\op{#1})\right)}

\newcommand{\ordsucc}{\mathord{\succ}}  						
\newcommand{\varsucc}{\vartriangleright}   						\newcommand{\varsucceq}{\trianglerighteq}
\newcommand{\ordvarsucc}{\mathord{\varsucc}} 					
\newcommand{\drp}[2]{\succ_{#1}^{-#2}} 							\newcommand{\drpeq}[2]{\succeq_{#1}^{-#2}}
\newcommand{\orddrp}[2]{\ordsucc_{#1}^{-#2}} 					
\newcommand{\ordvardrp}[2]{\mathord{\rhd}_{#1}^{-#2}}
\newcommand{\vvsucc}{\sqsupset} 								\newcommand{\ordvvsucc}{ \mathord{\sqsupset}}
\newcommand{\vvvsucc}{\Supset} 									

 				\newcommand{\whiteqed}{\hfill $\square$}

\theoremstyle{plain}
				\newtheorem{thm}{Theorem} \newtheorem*{thm*}{Theorem}
								\Crefmultiformat{thm}{#2Theorems~#1#3}{ and~#2#1#3}{, #2#1#3}{ and #2#1#3}
								\Crefrangeformat{thm}{Theorems~#3#1#4--#5#2#6}
				
								\Crefmultiformat{prop}{#2Propositions~#1#3}{ and~#2#1#3}{, #2#1#3}{ and #2#1#3}
								\Crefrangeformat{prop}{Propositions~#3#1#4--#5#2#6}
				\newtheorem*{prop*}{Proposition}
				\newtheorem{lem}{Lemma}	
								\Crefmultiformat{lem}{#2Lemmas~#1#3}{ and~#2#1#3}{, #2#1#3}{ and #2#1#3}
								\Crefrangeformat{lem}{Lemmas~#3#1#4--#5#2#6}
				\newaliascnt{fact}{thm}
				\newtheorem{fact}[fact]{Fact}
								\aliascntresetthe{fact}
								\Crefname{fact}{Fact}{Facts}
								\Crefmultiformat{fact}{#2Facts~#1#3}{ and~#2#1#3}{, #2#1#3}{ and #2#1#3}
								\Crefrangeformat{fact}{Facts~#3#1#4--#5#2#6}
				\newaliascnt{cor}{thm}
				\newtheorem{cor}[cor]{Corollary}
								\aliascntresetthe{cor}
								\Crefname{cor}{Corollary}{Corollaries}
				\newtheorem*{cor*}{Corollary}

\theoremstyle{definition}
				\newtheorem{ex}{Example}		
								\Crefmultiformat{ex}{#2Examples~#1#3}{ and~#2#1#3}{, #2#1#3}{ and #2#1#3}			
								\Crefrangeformat{ex}{Examples~#3#1#4--#5#2#6}
				\newtheorem{definition}{Definition}
								\Crefmultiformat{definition}{#2Definitions~#1#3}{ and~#2#1#3}{, #2#1#3}{ and #2#1#3}			
								\Crefrangeformat{definition}{Definitions~#3#1#4--#5#2#6}

\newcommand{\mon}{dropping monotonicity}

\newcommand{\header}[1]{\begin{center} \textit{#1} \end{center}}

\begin{document}

\title{Strong Observable Substitutability \\ and  the Cumulative Offer Mechanism\footnote{
		This study partially supersedes an earlier working paper titled ``Stability, strategy-proofness, and respect for improvements'' \citep{refall:1401}. 
		We particularly thank Yasunori Okumura for helpful discussions during the series of projects.
		We are also grateful to Akifumi Ishihara, Fuhito Kojima, Shintaro Miura, Takeshi Murooka, Shin Sato, Jan Christoph Schlegel, Tayfun S\"{o}nmez, Kentaro Tomoeda, Yuichi Yamamoto, and M.~Bumin Yenmez for their comments.
		Much of this research was conducted while Hirata was visiting UNSW Sydney; he is grateful for their hospitality. 
		This study is financially supported by JSPS KAKENHI 19K01542 and 23K12445. 
		The usual disclaimer applies.
		}
}
\author{%
Daisuke Hirata\footnote{Hitotsubashi University; d.hirata@r.hit-u.ac.jp} \and
Yusuke Kasuya\footnote{Kobe University; kasuya@econ.kobe-u.ac.jp} 
}
\date{This Version: \today}

\maketitle
\thispagestyle{empty}

\begin{abstract}
We study the properties of the cumulative offer mechanism (COM) when institutions' choice functions satisfy \emph{strong observable substitutability} (strong OS). 
First, we show that each of the following properties of the COM characterizes strong OS: IR monotonicity, weak Maskin monotonicity, and dropping monotonicity.
Dropping monotonicity is a new condition weaker than weak Maskin monotonicity, and it is interpretable as a weakening of strategy-proofness and non-bossiness. 
Second, we show that when choice functions are strongly OS, weak group strategy-proofness of the COM reduces to individual strategy-proofness. 
However, strong OS  is neither necessary nor ``almost necessary" for this reduction.
\bigskip \quad

\noindent
JEL Classification: C78, D47
\end{abstract}

\newpage
\thispagestyle{empty}
\setcounter{tocdepth}{2}
{
\tableofcontents
}

\onehalfspacing
\doublespacing
\newpage
\pagestyle{plain}
\setcounter{page}{1}

		\section{Introduction}

Matching with contracts \citep{refall:641} generalizes the classic model of priority-based matching, such as school choice \citep[e.g.,][]{refall:581}, in two ways. 
First, it allows multiple possible ways for an agent (e.g., student, worker, cadet, etc.) to be matched to an  institution (e.g., school, hospital, military branch, etc.).
Specifically, an allocation in matching with contracts is a set of contracts, each of which specifies an agent and institution to match as well as \emph{how} to match them.\footnote{
			Examples of different terms for an agent-institution pair to be matched include 
				tuition levels at a university \citep{refall:1647, refall:1448}, 
				salary levels and jobs at an employer \citep{refall:640, refall:1150},
				lengths of service at a military branch \citep{refall:1451, refall:1155, refall:1154}, and 
				waiting times for legal traineeships at a regional court \citep{refall:1183}.
			} 
It contrasts with the classic model, where the relation for each agent-institution pair is binary (i.e., either matched or unmatched).
Second, matching with contracts models an institution's priority structure as a \emph{choice function}, which selects a subset of contracts from each possible menu.
This specification also broadens the scope of the model as it can accommodate those priorities that cannot be reduced to a linear ranking over agents.\footnote{%
			A typical example is affirmative actions in school choice and college admissions \citep[e.g.,][]{refall:581,  refall:1253, refall:1382, refall:1454}.
			With minority reserves or majority quota, a school's admission criterion cannot be reduced to a simple ranking over students.
			}

In studies of matching with contracts, \emph{substitutability} and the \emph{cumulative offer mechanism} (henceforth, COM) are central. 
\citet{refall:641} define substitutability as the monotonicity of rejected sets: A choice function is substitutable if any contract unchosen from a smaller menu is never chosen from a larger menu (in the set sense).  
Under this condition, \citet{refall:641} show that various results in the classic model extend to matching with contracts. 
In particular, they show that the COM, which is a variant of the deferred acceptance \citep{refall:1163}, is \emph{stable} and \emph{strategy-proof} 
	when institutions' choice functions satisfy substitutability along with another condition called size-monotonicity (a.k.a.~the law of aggregate demand). 
Since stability and strategy-proofness are two central desiderata in matching market design, 
	 the subsequent literature has generalized this result and broadened its applicability to real-world markets. 
To this end, as we will briefly overview in \Cref{subsec:lit}, 
	several substitutability conditions have been proposed and studied by weakening \citeauthor{refall:641}'s (\citeyear{refall:641}) original. 
	While these conditions are known to have substantially different implications for the COM, the sources of these differences are not yet fully understood.

To better understand different substitutability conditions, 
this paper investigates the implications of \emph{strong observable substitutability} (henceforth, strong OS; \citealp{refall:1401}) and distinguishes it from other related conditions.
As its name suggests, strong OS strengthens \emph{observable substitutability} (henceforth, OS; \citealp{refall:1188}).
Both OS and strong OS are substitutability conditions restricted to the menus that arise during the computation process (i.e., the so-called \emph{cumulative offer process}) of the COM. 
The only difference lies in that OS compares menus only along each single path of the process, whereas strong OS compares across different paths. 
Thus, strong OS is stronger in imposing more restrictions on a choice function than OS. 
At the same time, strong OS is relatively weak among the existing substitutability conditions; see \Cref{fig:rel} below.

Despite their similar definitions, OS and strong OS have different implications for the COM: 
Specifically, OS and size-monotonicity are insufficient for the strategy-proofness of the COM \citep{refall:1188}, whereas strong OS and size-monotonicity jointly suffice \citep{refall:1401}. 
Nevertheless, it remains unclear how and why the two conditions diverge in their economic implications, even though the difference is mathematically apparent in their definitions.

To shed light on the gap between the two, we first study the monotonicity properties of the COM implied by strong OS.
We show that strong OS is characterized by \emph{each} of the following three properties: IR monotonicity, weak Maskin monotonicity, and \mon.
While the first two are from \citet{refall:680}, \mon~is a new concept and is the weakest among the three. 
\Cref{thm:mon} establishes that 
	the COM satisfies all three if choice functions are strongly OS, and it satisfies none of them otherwise. 
This theorem draws a definitive boundary around strongly OS choice functions, and it suggests any results that exploit these monotonicity conditions of the COM would also necessitate strong OS. 
Building upon \Cref{thm:mon}, we also extend \citet{refall:1189}'s characterization: Under strong OS, the COM is the only stable mechanism that satisfies weak Maskin monotonicity (\Cref{thm:uniqueness}). 
In this result, we \emph{cannot} replace weak Maskin monotonicity with dropping monotonicity.
\Cref{thm:uniqueness} thus distinguishes among the three monotonicity properties that are symmetric in \Cref{thm:mon}.

Second, we examine how strong OS relates (individual) strategy-proofness to weak group strategy-proofness for the COM.
Specifically, we show that under strong OS, strategy-proofness for the COM \emph{implies} weak group strategy-proofness (\Cref{thm:GSP}).
When stronger substitutabilities are combined with size-monotonicity, the COM is known to be not only strategy-proof but also weakly group strategy-proof.\footnote{
	See \citet{refall:1395,refall:1007} and \citet{refall:1191}.
	}
\Cref{thm:GSP} partially generalizes these existing results, and as a corollary, we also establish a new sufficient condition for the weak group strategy-proofness of the COM (\Cref{cor:suff}).
It should be noted that OS is insufficient for \Cref{thm:GSP}. 
At the same time, strong OS is \emph{not necessary} for strategy-proofness to imply weak group strategy-proofness, even in the ``almost necessity'' sense (\Cref{thm:anti-nec}).
See \Cref{sec:GSP} for details including the formal statement. 

The rest of the paper is organized as follows:
\Cref{subsec:lit} briefly overviews the related literature. 
\Cref{sec:model} introduces the model and basic definitions. 
\Cref{sec:char,sec:GSP} present the results. 
All the proofs are relegated to \Cref{sec:proofs}.

\subsection{Substitutability Conditions in the Literature} \label{subsec:lit}

This subsection briefly overviews the roles played by substitutability conditions in the theory of matching with contracts.
Since \citet{refall:641}, various substitutability concepts have been defined and studied: Key examples include 
	unilateral and bilateral substitutability \citep{refall:1007}, slot-specific priorities \citep{refall:1158}, substitutable completability \citep{refall:1191}, 
	observable substitutability and observable substitutability across agents \citep{refall:1188}, and strong observable substitutability \citep{refall:1401}.\footnote{%
			See also \citet{refall:711}, \citet{refall:1356}, \citet{refall:1380}, and \citet{refall:1391} for related concepts.
			}
\Cref{fig:rel} summarizes the logical relation among those conditions; e.g., unilateral substitutability implies substitutable completability, which in turn implies strong observable substitutability.\footnote{%
	For the logical implications among substitutability conditions, see also \citet{refall:1148} and \citet{refall:1392}.
	}

\begin{figure}
	\begin{center}
		\includegraphics[width=.99\textwidth]{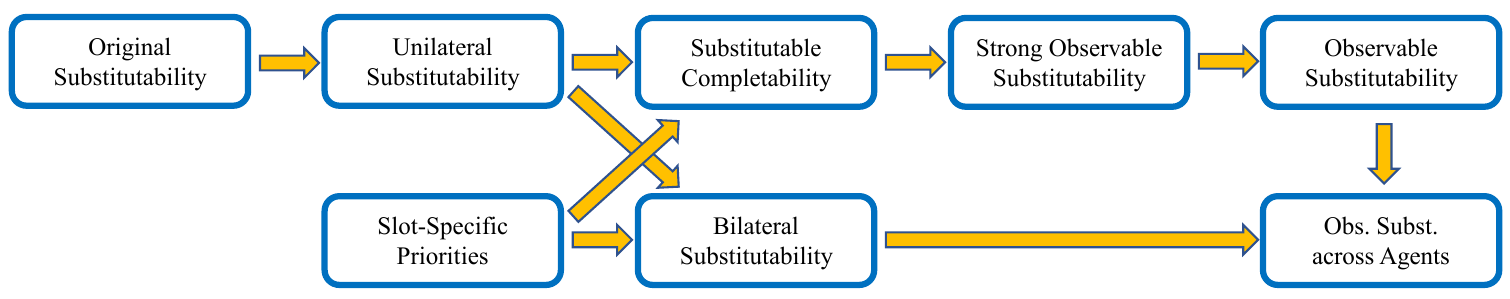}
	\end{center}
	\caption{Logical Relations among Substitutability Conditions} \label{fig:rel}
	\hrulefill
\end{figure}

Roughly speaking, those substitutability conditions have two theoretical roles. 
One is to guarantee the stability of allocations, which is a central desideratum in the matching literature. 
To this end, any of the substitutability conditions we listed at the beginning of this subsection suffices.
In particular, observable substitutability across agents, which is the weakest in the above list, guarantees the stability of the COM outcome for any preference profile of agents.

The other role of substitutability conditions is as a (sub)condition for the strategy-proofness of the COM. 
From this perspective, the above conditions are divided into three categories. 
The first is those that are sufficient for strategy-proofness when combined with size-monotonicity (a.k.a.~the law of aggregate demand). 
As mentioned earlier, \citet{refall:641} first establish such a result with the original substitutability. 
Since then, similar results have been obtained with unilateral substitutability, slot-specific priorities, substitutable completability, and strong observable substitutability. 
See the aforementioned references for details. 
Note that strong OS is the weakest among this first category. 
The second category is those that are not directly relevant for strategy-proofness. 
Namely, bilateral substitutability and observable substitutability across agents are insufficient even if we also assume size-monotonicity.
Indeed, to the best of our knowledge, there is no sufficient condition for the strategy-proofness of the COM that involve either of them as a subcondition. 
Third, observable substitutability constitutes its own category:
On one hand, it does not ensure strategy-proofness even in conjunction with size-monotonicity. 
On the other hand, it becomes sufficient when another condition, called non-manipulability via contractual terms, is satisfied on top of size-monotonicity.

		\section{Preliminaries} \label{sec:model}

This section introduces the matching-with-contracts framework, the cumulative offer mechanism, and the substitutability conditions studied in the paper.
Let $D$ and $H$ be finite sets of \emph{agents} and \emph{institutions}, respectively. 
The finite set of all possible \emph{contracts} is $\gs \subseteq D \times H \times \Theta$ for some finite $\Theta$.
The elements of $\Theta$ are called \emph{contractual terms} and represent different ways for a pair $(d,h) \in D \times H$ to be matched.
For each contract $x \in \gs$, let $\dx{x}$ and $\hx{x}$ be its projections onto $D$ and $H$, 
	 i.e., $x = ( \dx{x}, \hx{x}, \theta)$ for some $\theta \in \Theta$.
In other words, each $x$ is a bilateral contract between agent $\dx{x} \in D$ and institution $\hx{x} \in H$. 
In what follows, $\menu$ denotes a generic subset of $\gs$, and $\dx{\menu}$ and $\hx{\menu}$ denote the sets $\{\dx{x}: x \in \menu\}$ and $\{\hx{x}: x \in \menu\}$, respectively.
The power set of $\gs$ is denoted by $\Menus$.

A subset $\menu \InMenus$ of contracts is said to be an \emph{allocation}
	if it includes at most one contract for each agent, i.e., 	if $x, x' \in \menu$ and $x\neq x'$ imply $\dx{x} \neq \dx{x'}$.
Note that an institution can be assigned multiple contracts at an allocation; this is a many-to-one model. 
The set of all possible allocations is denoted by $\mathscr{A} \subseteq \Menus$. 
For each allocation $\menu \in \mathscr{A}$ and  agent $d \in D$, 
	let $\xd{d}{\menu}$ denote the contract that $\menu$ assigns to $d$; i.e., $\xd{d}{\menu} = x $ if $x \in \menu$ and $\dx{x} =d$.  
If there is no such contract in $\menu$ for agent $d$, then she is said to be assigned the \textit{null contract} and we let $\xd{d}{\menu} = \varnothing$.
In what follows, we use the symbols $\varnothing$ and $\emptyset$ to denote the null contract and the empty set, respectively.

Each agent $d \in D$ has a strict preference relation represented by a linear order $\succ_d$ over $\gs_d \cup \{\varnothing\}$, where $\gs_d := \{ x \in \gs: \dx{x} = d\}$ is the set of contracts relevant for $d$.
A contract $x \in \gs_d$ is said to be \emph{acceptable} if $x \succ_d \varnothing$, and $\Ac (\ordsucc_d) := \{ x \in \gs_d : x \succ_d \varnothing\}$ denotes the set of all acceptable contracts for $\succ_d$.
While preferences are assumed to be strict, we write $x \succeq_d y$ to denote [$x \succ_d y$ or $x = y$]. 
Note that we can naturally extend preferences over contracts to those over allocations:
When $X$ and $Y$ are allocations, with a slight abuse of notation, we write $X \succ_d Y$, $Y =_d z$, and so on to denote $\xd{d}{X} \succ_d \xd{d}{Y}$, $\xd{d}{Y} = z$, and so on. 
A profile of the agents' preferences and the domain of all possible profiles are denoted by
	$\mathord{\succ_D} = (\succ_d)_{d \in D}$ and $\mathord{\mathscr{P}_D} =\prod_{d\in D} \mathscr{P}_d$,
	respectively.

Taking a subset $Y$ of contracts and a preference $\succ_d$ as given, 
	the \emph{dropping} of $Y$ from $\succ_d$, denoted by $\orddrp{d}{Y}$,
	is a preference for the same agent $d$ such that 
	(i) $\Ac \left(\orddrp{d}{Y}\right) = \Ac (\ordsucc_d) - Y$ and (ii) $w \drp{d}{Y} w' \Leftrightarrow w \succ_d w' $ for all $w, w' \in \Ac \left(\orddrp{d}{Y}\right) $.
Strictly speaking, $\orddrp{d}{Y}$ is an equivalence class of preferences 
	because the two conditions impose no restriction on the ranking among unacceptable contracts.\footnote{
		For instance, suppose $\gs_d = \{x,y,z,w\}$ and $\ordsucc_d$ is such that $x \succ_d y \succ_d z \succ_d \varnothing \succ_d w$. 
		Then, two preferences $\ordvarsucc_d$ and $\ordvvsucc_d$ both satisfy the two conditions for $\drp{d}{\{y\}}$, where
			 $x \varsucc_d  z \varsucc_d \varnothing \varsucc_d y \varsucc_d  w$ and $x \vvsucc_d  z \vvsucc_d \varnothing \vvsucc_d w \vvsucc_d y$. 
		}
However, this study exclusively considers a particular mechanism, the cumulative offer mechanism, and its outcome depends only on the preferences over acceptable contracts.
Therefore, we treat it as if it is a single preference, and it should not cause any confusion. 
Note also that $\orddrp{d}{Y}$ is well-defined even if $\dx{y} \neq d$ for some $y \in Y$.
In particular, $\orddrp{d}{Y} = \ordsucc_d$ if $d \not\in \dx{Y}$.
Given a profile $\ordsucc_D = ( \ordsucc_d )_{ d \in D }$, we simply write $\orddrp{D}{Y}$ to denote $\left( \orddrp{d}{Y} \right)_{ d \in D }$.
When the dropped set of contracts is a singleton, for brevity,
	we write $\orddrp{d}{x}$ and $\orddrp{D}{x}$ instead of  $\orddrp{d}{\{x\}}$ and $\orddrp{D}{\{x\}}$, respectively.

A \emph{mechanism} is a mapping $f: \mathscr{P}_D \to \mathscr{A}$ that 
	associates each possible preference profile $\ordsucc_D$ with an allocation $f (\ordsucc_D)$. 
A mechanism $f (\cdot)$ is said to be \emph{(individually) strategy-proof} if there are no $\ordsucc_D, \ordvarsucc_D \in \pdom$ and $d \in D$ such that 
$f (\ordvarsucc_D) \succ_d f (\ordsucc_D)$ and $ \ordsucc_{d'} = \ordvarsucc_{d'} $ for all $d' \in D - \{d\}$.
It is called \emph{non-bossy} if there are no $\ordsucc_D, \ordvarsucc_D \in \pdom$ and $d \in D$ 
	such that $ \ordsucc_{d'} = \ordvarsucc_{d'} $ for all $d' \in D - \{d\}$, $f (\ordsucc_D) =_d f (\ordvarsucc_D)$, and $f (\ordsucc_D) \neq  f (\ordvarsucc_D)$;
	here, as we defined above, $f (\ordsucc_D) =_d f (\ordvarsucc_D)$ means that the two allocations assign agent $d$ the same (possibly null) contract.

	\subsection{Choice Functions and the Cumulative Offer Mechanism}

This subsection introduces institutions’ choice functions and defines the cumulative offer mechanism.
Each institution $h\in H$ has a \emph{choice function} $C_h: \Menus \to \mathscr{A}$ such that 
	for every menu $\menu \InMenus$ of contracts, $C_h (\menu) \subseteq \menu$ and $\hx{x}=h$ for all $x \in C_h (\menu)$.
Throughout the paper, we assume that the choice functions satisfy the following mild requirement: 
Institution $h$'s choice function $C_h(\cdot)$ is said to satisfy 
	the \textit{irrelevance of rejected contracts} (for short, IRC; \citealp{refall:966}) if 
	\begin{align*}
	\left[ x \not\in C_h (\menu \cup \{x\}) \Rightarrow C_h (X \cup {\{ x \}} )= C_h (\menu) \right]
	\mbox{ for all } x \in \gs \mbox{ and } \menu \InMenus.
	\end{align*}
Note that this condition is satisfied when a choice function is induced by a strict preference over subsets of contracts, which is the case in all examples below.
Taking a choice function $C_h$ as given, the \emph{rejection function} $R_h$ \emph{associated with} $C_h$ is defined by $R_h (\menu) := \menu - C_h (\menu)$ for each $\menu \InMenus$.
A profile of the institutions' choice functions is denoted by $C_H= (C_h)_{h \in H}$. 
Slightly abusing notation, we will often identify $C_H$ with the aggregate choice function, so that $C_H (\menu)$ denotes $\cup_{h} C_h (\menu)$ for each $\menu \InMenus$.
Note that the aggregate $C_H (\cdot)$ should satisfy the IRC, given that each component $C_h$ does.
The \emph{aggregate rejection function associated with $C_H$} is defined by  $R_H (\menu) := \menu - C_H (\menu) = \cap_h R_h (\menu)$ for each $\menu \InMenus$.

We call a tuple $\left( D, H, \gs, C_H \right)$ a \emph{market}. 
Taking a market as given, we define \emph{stability} as follows.
An allocation $\menu \in \mathscr{A}$ is said to be \textit{individually rational} at $(\succ_D, C_H)$, if 
	$X \succeq_{d} \varnothing$ for all $d \in D$ and $C_h(\menu)=\{ x \in \menu: \hx{x} = h \}$ for all $h \in H$.
A pair of an institution $h \in H$ and a subset $\menu' \InMenus$ of contracts is said to \textit{block} an allocation $\menu \in \mathscr{A}$ at $(\succ_D, C_H)$ if 
 $C_h ( \menu \cup \menu' )  \neq C_h (\menu)$ and $C_h(\menu\cup \menu') \succeq_{d}\menu$ for all $d \in \dx{ C_h (\menu\cup \menu')}$.\footnote{
			Requiring $C_h (\menu \cup \menu') = \menu'$ is redundant here, although it is often a part of the definition in the literature.
			See \citet[][Lemma 1]{refall:1324} for details.
			}
An allocation $\menu$ is said to be \textit{stable} at $(\succ_D, C_H)$ if it is individually rational and not blocked by any $(h, \menu') \in H \times \Menus$.
A mechanism $f$ is stable if its output $f(\ordsucc_D)$ is stable for any $\ordsucc_D \in \pdom$.

Lastly, taking a market as given and fixed, we algorithmically define the \emph{cumulative offer mechanism} as \Cref{def:COP} below. 
Strictly speaking, the following definition is improper in two ways:
First, it does not fully specify who should make an ``offer'' at each step of the algorithm when multiple agents are qualified. 
Without any assumption on $C_H$, the outcome of the algorithm may vary depending on who is chosen to make an offer. 
Second, the outcome may contain multiple contracts for a single agent; i.e., it may not be an allocation as defined above. 
Nevertheless, we will impose an assumption on $C_H$ in the next subsection so that this mechanism is uniquely defined and always outputs an allocation. 

\begin{definition} \label{def:COP}
Given a market $\left( D, H, \gs, C_H \right)$ as well as a preference profile $\ordsucc_D \in \pdom$, 
	the \emph{cumulative offer process} (for short, \emph{COP}) computes a subset of contracts as follows.
	\begin{itemize}
	\item Initial condition: Let $D_0 = D$ and $P_0 = \emptyset$. 			
	\item Step $t \geq 1$:
				Arbitrarily fix an agent $d_t \in D_{t-1}$. 
				She offers her best contract,  say ${x_t}$, among those remaining (i.e., among $\gs - P_{t-1}$) according to her preference $\ordsucc_{d_t}$. 
				Let $P_t = P_{t-1} \cup \{ {x_t} \}$ be 
					the pool of contracts that have been offered up to this step.
				Among $P_t$, each institution $h$ holds the best combination of contracts, $C_h (P_t)$.
Lastly, let $D_t$ be the set of agents who are currently unmatched and still have acceptable contracts to offer; i.e.,
				\begin{align*} 
							D_t = \{ d \in D:  
							d \not\in \dx{C_H (P_t)} \mbox{ and } 	\Ac (\succ_d) - P_t \neq \emptyset \}. 
				\end{align*}
				Proceed to step $t+1$ if $D_t$ is non-empty, and terminate otherwise. 
	\item Outcome: 
				When the process terminates after step $T$, its outcome is $C_H (P_{T})$. 	
	\end{itemize} 
The \emph{cumulative offer mechanism} (for short, \emph{COM}) is the mechanism $\cop: \pdom \to \mathscr{A}$ that 
	assigns the outcome of the above process to each $\ordsucc_D$.
\whiteqed
\end{definition}

	\subsection{Substitutability Conditions}

We next introduce three substitutability conditions for choice functions, but to do so, we need several preliminary definitions: 
An \emph{offer process} is a finite sequence $\op{x} = (x_1, \dots, x_n)$ of distinct contracts,
	and its range as a set (rather than a sequence) is denoted by $\set (\op{x}) := \{ x_1, \dots, x_n\}$.
Roughly speaking, an offer process is \emph{observable at $C_H = (C_h)_{h\in H}$} if it can arise during the COP with \emph{some} preference profile and the fixed choice function profile $C_H$. 
More formally, $(x_1,\dots, x_n)$ is observable at $C_H$ if 
	\begin{align} \label{eq:obs}
		\dx{x_{t+1}} \not\in \dx{ \Bigl. C_H (\{x_1, \dots, x_t\})}
		\mbox { for each } t \in \{1,\dots, n-1\}.
	\end{align}
Note that \cref{eq:obs} means agent $\dx{x_{t+1}}$ does not hold a contract at $C_H (\{x_1, \dots, x_t\})$;
	as such, she is eligible to make an offer at step $t+1$, and $x_{t+1}$ can indeed be her offer if it is her next acceptable contract.

We are now ready to define the three substitutability conditions as follows.
While we could define these conditions for individual choice functions, we define them for profiles of them for the sake of brevity. 
One can easily check the definitions for individual functions are equivalent to those for profiles.\footnote{
		More precisely, a profile $C_H = (C_h)_{h\in H}$ meets the following conditions if and only if 
			each component $C_h$ meets the corresponding conditions.
		}

\begin{definition} \label{def:subst}
A profile $C_H$ of choice functions is said to be \emph{strongly observably substitutable} (for short, strongly OS) if 
		\begin{align*}
		R_H \left(\set (\op{x}) \right) \subseteq R_H  \left(\set (\op{y}) \right),
		\end{align*}
		for any  two observable offer processes $\op{x} = (x_1, \dots, x_m)$ and $\op{y} = (y_1, \dots, y_n)$ such that $\set (\op{x}) \subseteq \set (\op{y})$.
It is said to be \emph{observably substitutable} (for short, OS) if 
	\begin{align*} R_H (\{ x_1, \dots, x_{n-1} \}) \subseteq R_H (\{x_1, \dots, x_{n}\}), \end{align*} for any observable offer process $(x_1, \dots, x_n)$.
It is said to be \emph{observably substitutable across agents} (for short, \emph{OSaA}) if 
	\begin{align*} 
			\left[x \in R_H \left( \{x_1, \dots, x_{n-1}\}\right) - 
			{	R_H \left( \{x_1, \dots, x_{n}\}\right) }
			\Bigr.\right]  \Rightarrow \left[ \dx{x} \in \dx{C_H \left(\{x_1, \dots, x_{n-1}\}\right) \Bigr.} \right],
	\end{align*}
	holds for any observable offer process $(x_1, \dots, x_n)$.
\whiteqed
\end{definition}

Comparing the three conditions, strong OS implies OS, which in turn implies OSaA. 
First, strong OS is logically stronger than OS in that 
	the former compares $\op{x}$ and $\op{y}$ arising from different paths of COPs, while the latter only considers the case where $\op{x}$ is a subprocess of $\op{y}$.
Second, OS is stronger than OSaA in that 
	the former precludes any instance of $x \in R_H \left( \{x_1, \dots, x_{n-1}\}\right) - {	R_H \left( \{x_1, \dots, x_{n}\}\right) }$
	whereas the latter admits it as long as $ \dx{x} \in \dx{C_H \left(\{x_1, \dots, x_{n-1}\}\right)}$.
\Crefrange{ex:not-sOS}{ex:not-OS} below demonstrate that these logical relations are strict.
As we explained in the introduction, it should also be noted that  
	all of the three conditions are strictly weaker than \citeauthor{refall:641}'s (\citeyear{refall:641}) original substitutability, which requires $R_H (X) \subseteq R_H (Y)$ for any $X \subseteq Y \subseteq \gs$, and than several other conditions. 
Actually, it is weaker than substitutable completability \citep{refall:1191}, and hence, it is also weaker than unilateral substitutability \citep{refall:1007} and slot-specific priorities \citep{refall:1158};
	see \citet{refall:1401} for a proof.

\begin{ex}[Strong OS is strictly stronger than OS] \label{ex:not-sOS}
Let $D = \{d_1, d_2, d_3\}$, $H = \{h\}$, and $\gs= \{x_1, x_2, y_2, x_3, y_3\}$, where $\dx{x_1} = d_1$, $\dx{x_2} = \dx{y_2} =d_2$, and $\dx{x_3}=\dx{y_3} = d_3$. 
Suppose that $C_h$ is the choice function induced by the following preference relation $\succ_h$ over the subsets of contracts: 
	\begin{align*}
		\{x_1, y_2, y_3\} 	&\succ_h \{x_1, y_3\} \succ_h \{ y_2, x_3\} \succ_h \{x_1,x_2\} \\
								&\succ_h [\mbox{all the other feasible doubletons}] \succ_h [\mbox{all the singletons}] \succ_h \emptyset, 
	\end{align*}
	where any tripleton except $\{x_1,y_2, y_3\}$ is unacceptable.
The rankings among ``the other'' feasible doubletons and those among the singletons are arbitrary. 
In what follows, we confirm that $C_h$ is OS but not strongly OS.

Let $\op{w}^3 = (w_1, \dots, w_3)$ be an observable offer process at $C_h$. 
Then, as $C_h$ accepts any feasible doubleton, the three contracts should involve the three distinct agents; i.e., we have only four cases for $\set \left( \op{w}^3\right)$ as listed in \Cref{table:not-sOS}.
Suppose, for instance, $\set \left( \op{w}^3\right) = \{x_1, x_2, x_3\}$. 
Then, according to $\ordsucc_h$ above, $C_h$ chooses $\{x_1, x_2\}$ and rejects $x_3$. 
Therefore, $\op{w}^4 = (w_1, \dots, w_4)$ is observable only if $w_4 = y_3$.
Moreover, since $C_h$ chooses $\{x_1, y_3\}$ and rejects $\{x_2, x_3\}$ from $\set \left( \op{w}^3\right) = \{x_1, x_2, x_3, y_3\}$, we must have $w_5 = y_2$ for $\op{w}^5 = (w_1, \dots, w_5)$ to be observable. 
As $C_h$ rejects $\{x_2, x_3\}$ again from $\{x_1, x_2, x_3, y_3, y_2\}$, the rejected set monotonically increases along this path of the offer processes. 
With \Cref{table:not-sOS}, one can easily confirm that no violation of OS arises along the other three cases, either. 
That is, $C_h$ is an OS choice function. 
However, $R_h (\{x_1, y_2, x_3\}) = \{x_1\} \not\subseteq \{x_2, x_3\} = R_h ( \{x_1, x_2, x_3, y_3, y_2\})$,
	even though both $(x_1, y_2, x_3)$ and $(x_1, x_2, x_3, y_3, y_2)$ are observable.
Hence, $C_h$ is \emph{not} strongly OS even though it is OS. 
\whiteqed
\end{ex}

\begin{table}[t!]
	\renewcommand{\arraystretch}{1.5}
	\begin{center}
	\begin{tabular}{ccccccc} \toprule[1.5pt]
	$\set \left( \op{w}^3 \right)$ 	&$R_h \left(\set \left( \op{w}^3 \right)\right)$  
	&$w_4$ &$R_h \left(\set \left( \op{w}^4 \right)\right)$ &$w_5$ 	&$R_h \left(\set \left( \op{w}^5 \right)\right)$  \\ \midrule
	$\{x_1, x_2, x_3\}$ 			&$\{x_3\}$ 											&$y_3$ 		&$\{x_2, x_3\}$		&$y_2$ 	&{\color[named]{Maroon}$\pmb{\{x_2, x_3\}}$}\\
	$\{x_1, x_2, y_3\}$ 			&$\{x_2\}$ 											&$y_2$		&$\{x_2\}$ \\
	$\{x_{1}, y_{2}, x_{3}\}$ 	&{\color[named]{Maroon}$\pmb{\{x_1\}}$} 		\\ 
	$\{x_{1}, y_{2}, y_{3}\}$ 	&$\emptyset$ \\ 
	\bottomrule[1.5pt]
	\end{tabular} 
	\end{center}
	\caption{Observable offer processes for $\succ_h$ in \Cref{ex:not-sOS}.} \label{table:not-sOS}
	\rule{\textwidth}{.1pt}
\end{table}

\begin{ex}[OS is strictly stronger than OSaA] \label{ex:not-OS}
Let $D = \{d_1, d_2\}$, $H = \{h\}$, and $\gs = \{x_1, x_2, y_1, y_2\}$, where for each $i \in \{1,2\}$, $x_i$ and $y_i$ are two different contracts between $d_i$ and $h$.
Suppose that $C_h$ is the choice function induced by the following preference relation $\succ_h$ over the subsets of contracts: 
	\begin{align*} \{x_1, x_2\} \succ_h \{y_1\} \succ_h \{y_2\} \succ_h \{x_1\} \succ_h \{x_2\} \succ_h \emptyset, \end{align*}
	where all the doubletons except $\{x_1, x_2\}$ are unacceptable. 
This choice function violates OS	but it does so only along two observable offer processes: $(w_1, \dots, w_4) = (x_1, y_2, y_1, x_2)$ and $(y_2, x_1, y_1, x_2)$.
Along these processes, $C_h$ chooses $x_1$ from $\{x_1, y_2, y_1, x_2\}$ while rejecting it from $\{x_1, y_2, y_1\}$.
However, it chooses $y_1$ for $d_1 = \dx{x_1}$ from $\{x_1, y_2, y_1\}$.
Therefore, $C_h$ satisfies OSaA although it violates OS. 
\whiteqed
\end{ex}

Throughout the rest of this paper, we assume that the profile of choice functions meets OSaA. 
This implies that the COM is uniquely defined and its output $\cop (\ordsucc_D)$ is always an allocation for any $\ordsucc_D \in \pdom$;
	moreover, OSaA is ``almost'' necessary and sufficient for the stability of the COM \citep{refall:1188}, which is a central desideratum in the literature.
Without OSaA, put differently, the COM may not be well-defined, and even if it is, it loses its key property.\footnote{
			See also \citet{refall:1007}, \citet{refall:1147}, and \citet{refall:1158} for the stability and uniqueness of the COM.
			}
Hence, the assumption of OSaA would be almost indispensable in studying the COM.

		\section{Characterizations of Strong OS} \label{sec:char}

In this section, we characterize strongly OS choice functions based on monotonicity properties of the COM. 
We introduce three properties and show that the COM meets any of them if and only if the profile of choice functions is strongly OS. 
Since each of the three is defined based on a class of preference transformations, we introduce those transformations first.

\begin{definition}  \label{def:transformations}
A preference $\ordsucc_d \in \mathscr{P}_d$ of agent $d$ is said to be
	an \emph{IR monotonic transformation} of another preference $\ordvarsucc_d \in \mathscr{P}_d$ at $x \in \gs \cup \{\varnothing\}$ if 
		\begin{align} \label{eq:def:IR}
				\{ w \in \Ac (\ordsucc_d): w \succ_d x \} \subseteq \{ w \in \gs \cup \{ \varnothing\}: w \varsucc_d x \}.
		\end{align}
It is said to be a \emph{monotonic transformation} of $\ordvarsucc_d $ at $x \in \gs \cup \{\varnothing\}$ if 
		\begin{align} \label{eq:def:weak-Maskin}
				\left\{ w \in \gs \cup \{ \varnothing\}: w \succ_d x \right\} \subseteq \left\{ w \in \gs \cup \{ \varnothing\}: w \varsucc_d x \right\}.
		\end{align}
It is said to be a \emph{dropping transformation} of $\ordvarsucc_d $ at $x \in \gs \cup \{\varnothing\}$ 
	if $\ordsucc_d = \ordvardrp{d}{Y}$ for some $Y \subseteq \gs$ such that $x \not\in Y$.
\whiteqed
\end{definition}

Comparing the three, IR monotonic transformations are a superset of monotonic transformations, which in turn are a superset of dropping transformations.
First, comparing \cref{eq:def:weak-Maskin,eq:def:IR}, 
	the left-hand side for an IR monotonic transformation, $\{ w \in \Ac (\ordsucc_d): w \succ_d x \}$,
	is a subset of its counterpart for a monotonic transformation, $\{ w \in \gs \cup \{ \varnothing\}: w \succ_d x \}$.
Hence,	a monotonic transformation of $\varsucc_d$ at $x$ is always an IR monotonic transformation of $\varsucc_d$ at $x$.
Second, if $\ordsucc_d = \ordvardrp{d}{Y}$ for some $Y \not\ni x$, we have 
	\begin{align*}
		\left\{ w \in \gs \cup \{ \varnothing\}: w \succ_d x \right\} = \left\{ w \in \gs \cup \{ \varnothing\}: w \varsucc_d x \right\} - Y. 
	\end{align*}
Thus, a dropping transformation of $\varsucc_d$ at $x$ is always a monotonic transformation of $\varsucc_d$ at $x$. 
Furthermore, these logical relations among the three are strict as the following example demonstrates.

\begin{ex} \label{ex:transformations}
Let $\gs_d = \{x,y,z\}$ and define three preferences over $\gs_d \cup \{\varnothing\}$ as follows: 
	$y \varsucc_d x \varsucc_d \varnothing \varsucc_d z$,
	$\varnothing \succ_d x \succ_d y \succ_d z$, and 
	$x \vvsucc_d y \vvsucc_d \varnothing \vvsucc_d z$. 
Comparing these three, we make two observations:
First, $\ordsucc_d$ is an IR monotonic transformation of $\varsucc_d$ at $x$, while it is not a monotonic transformation.
Second, $\ordvvsucc_d$ is a monotonic transformation of $\varsucc_d$ at $x$, whereas it is not a dropping transformation. 
\whiteqed
\end{ex}

Based on the three transformations, we define three monotonicity properties for a mechanism. 
The first two are \emph{IR monotonicity} and \emph{weak Maskin monotonicity}.
In the classic matching without contracts, \citet{refall:680} introduce these concepts as a part of the axioms for their characterization of the deferred acceptance mechanism.\footnote{See also \citet{refall:1449} for a related characterization.} 
The third property, \emph{dropping monotonicity}, is newly defined in the present study.

\begin{definition}  \label{def:mon's}
A mechanism $f (\cdot)$ is said to be \emph{IR monotonic} (resp.~\emph{weakly Maskin monotonic} and \emph{dropping monotonic}) if
	\begin{align} f (\ordsucc_D) \succeq_d f (\ordvarsucc_D) \mbox{ for all } d \in D \end{align}
	holds whenever $\ordsucc_D, \ordvarsucc_D \in \pdom$ are such that for each $d \in D$, 
	$\succ_d$ is an IR monotonic transformation (resp.~monotonic transformation and dropping transformation) of $\varsucc_d$ at $\xd{d}{f (\ordvarsucc_D)}$, i.e., at the contract assigned to $d$ under $f(\ordvarsucc_D)$.
\whiteqed
\end{definition}

Notice that because of the aforementioned relations among transformations, IR monotonicity implies weak Maskin monotonicity, which in turn implies dropping monotonicity.
In general, the gaps among the three monotonicities are non-empty:
For instance, the top trading cycles mechanism in the classic model without contracts is not IR monotonic  \citep{refall:680, refall:1450}.
For the COM, however, \Cref{thm:mon} below shows that they become equivalent: 
The COM meets any of the three if and only if it meets all, if and only if the choice functions are strongly OS.

\begin{thm} \label{thm:mon}
	Let $C_H$ be an OSaA profile of choice functions.
	Then, the following are all equivalent: 
	(i) the COM is IR monotonic, (ii) it is weakly Maskin monotonic, (iii) it satisfies \mon, and (iv) $C_H$ is strongly OS.
\end{thm}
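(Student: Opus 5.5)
Since IR monotonicity implies weak Maskin monotonicity, which implies \mon, the chain (i)$\Rightarrow$(ii)$\Rightarrow$(iii) is immediate. It remains to show (iv)$\Rightarrow$(i) and (iii)$\Rightarrow$(iv).

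\emph{Step 1: (iv)$\Rightarrow$(i).} Fix $\ordvarsucc_D$ and $\ordsucc_D$ as in the definition of IR monotonicity. Let $Z=\cop(\ordvarsucc_D)$, and let $\op{y}$ be the COP at $\ordsucc_D$, with final pool $\set(\op{y})$. The claim to prove is that no agent $d$ ends up worse than $z_d:=\xd{d}{Z}$ under $\ordsucc_D$. Suppose not, and let $t$ be the first step of $\op{y}$ at which some agent $d$ offers a contract $w$ with $z_d\succ_d w$, or exhausts her acceptable contracts before reaching $z_d$. Then every contract of $d$ ranked above $z_d$ in $\succ_d$ and acceptable there was offered earlier and is rejected at the pool $\{y_1,\dots,y_{t-1}\}$. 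Moreover $z_d$ itself was offered before step $t$ and is rejected there, since $d$ is eligible. By \cref{eq:def:IR}, each such contract is also $\varsucc_d$-above $z_d$, or it equals $z_d$. Let $\op{x}$ be the COP at $\ordvarsucc_D$, so $Z=C_H(\set(\op{x}))$. The key is to compare the pool $\set(\op{x})$ with the prefix of $\op{y}$ up to step $t-1$. Every contract in the prefix is, for its proposer $d'$, weakly $\succ_{d'}$-above $z_{d'}$ by minimality of $t$, hence weakly $\varsucc_{d'}$-above $z_{d'}$, hence it lies in $\set(\op{x})$. Take the observable process obtained by listing the prefix of $\op{y}$ first and then appending the remaining contracts of $\set(\op{x})$ in a suitable order. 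Strong OS applied to the prefix versus this extension, which has the same range as $\op{x}$, gives $z_d\in R_H(\set(\op{x}))$, a contradiction. The same argument handles agents with $z_d=\varnothing$ trivially.

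\emph{Step 1, main obstacle.} The technically delicate point is showing that such a reordering is observable, or more directly, establishing $R_H(\set(\op{y}_{<t}))\subseteq R_H(\set(\op{x}))$ using only observable processes. If the direct reordering fails, the first fallback is to compare $\op{y}_{<t}$ with the union pool via a COP run at $\ordvarsucc_D$ that is started by replaying $\op{y}_{<t}$. This replay is legitimate because each of those offers is made by an unmatched agent, and OSaA guarantees well-definedness and order independence of the COM outcome. The concluding step then uses $Z$ with IRC to identify the rejected sets.

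\emph{Step 2: (iii)$\Rightarrow$(iv), by contrapositive.} Suppose strong OS fails: there are observable $\op{x},\op{y}$ with $\set(\op{x})\subseteq\set(\op{y})$ and some $z\in R_H(\set(\op{x}))\cap C_H(\set(\op{y}))$. Choose a minimal violating pair, for example with $\op{y}$ shortest. Construct $\ordvarsucc_D$ so that the COP reproduces $\op{y}$: each agent ranks her contracts in $\set(\op{y})$ in the order they appear in $\op{y}$ and finds everything else unacceptable. Minimality and observability should make the run terminate with pool $\set(\op{y})$, so that $z\in\cop(\ordvarsucc_D)$. Then form $\ordsucc_D=\ordvardrp{D}{Y}$ with $Y=\set(\op{y})-\set(\op{x})-\cop(\ordvarsucc_D)$, which is a dropping transformation at each agent's assignment. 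Under $\ordsucc_D$ the COP replays $\op{x}$, in which $z$ is rejected. Arranging that $\dx{z}$ has no acceptable contract left, or only ones ranked below $z$, makes $\dx{z}$ strictly worse off, which violates \mon. Engineering the preferences so the COP actually follows these sequences and terminates appropriately is the other nontrivial part. I expect to use the minimality of the violation to trim $\op{x}$ and $\op{y}$ so that no extra offers occur.
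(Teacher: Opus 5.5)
Your Step 1, (iv)$\Rightarrow$(i), is sound, and it takes a genuinely different and shorter route than the paper. The paper first proves weak Maskin monotonicity via \Cref{lem:sOS}: it handles profiles with $\Ac(\ordsucc_d)\subseteq\Ac(\ordvarsucc_d)$ first, then the general case by dropping. Only afterwards does it upgrade to IR monotonicity, through a chain of intermediate profiles using \Cref{lem:weak:COP}. Your ``first offer strictly below $z_d$'' argument looks only at contracts actually offered under $\ordsucc_D$. These are exactly the contracts an IR monotonic transformation controls, so you get IR monotonicity at once.

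The ``main obstacle'' you flag does not exist. Strong OS compares \emph{any} two observable processes with nested ranges. Both the prefix $(y_1,\dots,y_{t-1})$ and the complete process $\op{x}$ at $\ordvarsucc_D$ are observable, so $R_H(\{y_1,\dots,y_{t-1}\})\subseteq R_H(\set(\op{x}))$ is immediate. No reordering is needed. Your fallback replay of $\op{y}_{<t}$ at $\ordvarsucc_D$ would not be legitimate anyway, since those offers need not be best-remaining under $\ordvarsucc$. You should treat separately the case in which nobody ever offers below her $z_d$. There, use the terminal pool $\set(\op{y})\subseteq\set(\op{x})$: the worse-off agent is unmatched and has exhausted $\Ac(\ordsucc_d)\ni z_d$.

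The genuine gap is in Step 2 (in your Step 2 notation). Your claim that the COP at $\ordsucc_D=\ordvardrp{D}{Y}$ ``replays $\op{x}$'' is unsupported, for two reasons.
\begin{itemize}
\item Dropping preserves each agent's $\ordvarsucc_d$-order, which you fixed as the order of appearance in $\op{y}$. But $\op{x}$ is an arbitrary observable process and may present an agent's contracts in another order.
\item The contracts in $C_H(\set(\op{y}))-\set(\op{x})$ cannot be dropped, because they are assignments. So agents unmatched at $C_H(\set(\op{x}))$ may keep offering beyond $\set(\op{x})$.
\end{itemize}
Because strong OS is exactly what fails, nothing then stops $z$ from being chosen back later; OSaA allows this whenever $\dx{z}$ holds some other contract at that moment. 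Nor does anything stop $\dx{z}$ from ending with another of her contracts in $\set(\op{x})$ that she ranks above $z$. In either case \mon\ is not violated. Minimizing over $\op{y}$ does not address this. Your sketch also never invokes OSaA, which the argument needs at exactly this point. By contrast, your worry that the run at $\ordvarsucc_D$ terminates with pool $\set(\op{y})$ is automatically satisfied by your construction.

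What is missing is the paper's construction:
\begin{enumerate}
\item Take $\op{x}$, not $\op{y}$, of minimal length. This gives $R_H(\{x_1,\dots,x_{m'}\})\subseteq R_H(\{x_1,\dots,x_{m'+1}\})\cap R_H(\set(\op{y}))$ for all $m'<m$. Together with minimality, it shows that $\dx{z}$ holds no contract at $C_H(\set(\op{x}))$.
\item Drop everything outside $C_H(\set(\op{x}))\cup C_H(\set(\op{y}))$, so that $z$ is the \emph{only} acceptable contract of $\dx{z}$.
\item Start the COP by offering only $C_H(\set(\op{x}))$ followed by $z$, not all of $\op{x}$. By IRC, $z$ is rejected from this pool, while \mon\ forces $z$ to be chosen at the end. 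OSaA forbids this chosen-back event, because $\dx{z}$ holds nothing else.
\item This start may be impossible because some agent ranks her $C_H(\set(\op{y}))$ contract above her $C_H(\set(\op{x}))$ contract. Then a separate argument contradicts $\set(\op{x})\subseteq\set(\op{y})$: OS along $\op{y}$, or the shortest-length property when OS itself fails.
\end{enumerate}
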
		
\begin{proof} See \Cref{subsec:proof:COM-mon}. \end{proof}

Under unilateral substitutability \citep{refall:1007}, 
	a stable mechanism satisfies weak Maskin monotonicity if and only if it is the COM \citep{refall:1189}.
\Cref{thm:mon} generalizes the ``if'' part of this characterization by weakening unilateral substitutability to strong OS.
We next show that the ``only if'' part can also be extended, yielding a full characterization under strong OS as follows.

\begin{thm} \label{thm:uniqueness}
Let $C_H$ be a strongly OS profile of choice functions. 
Then, a mechanism is stable and weakly Maskin monotonic if and only if it is the COM.
\end{thm}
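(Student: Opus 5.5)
The ``if'' direction is immediate from earlier results. Under strong OS (hence OSaA), the COM is stable \citep{refall:1188}, and by \Cref{thm:mon} it is weakly Maskin monotonic. The work is in the ``only if'' direction: given a stable and weakly Maskin monotonic mechanism $f$, I will show $f(\ordsucc_D) = \cop(\ordsucc_D)$ for every $\ordsucc_D \in \pdom$. I follow the strategy of \citet{refall:1189}, reducing an arbitrary profile to a ``truncated'' profile at which stability pins down a unique allocation.

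\textbf{Step 1 (truncation).} Fix $\ordsucc_D$ and let $Z = \cop(\ordsucc_D)$. Define $\ordvarsucc_D$ by letting each $d$ rank $\xd{d}{Z}$ as her unique acceptable contract, keeping $\varnothing$ as the only option if $\xd{d}{Z} = \varnothing$. Then $\ordsucc_d$ is a monotonic transformation of $\ordvarsucc_d$ at $\xd{d}{Z}$. Indeed, $\{w : w \varsucc_d \xd{d}{Z}\}$ contains every contract and $\varnothing$, because $\xd{d}{Z}$ is top-ranked or $\varnothing$ is top-ranked. Conversely, $\ordvarsucc_d$ is a monotonic transformation of $\ordsucc_d$ at $\xd{d}{Z}$, since $\{w : w\varsucc_d \xd{d}{Z}\} = \emptyset$. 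The plan is to show $f(\ordvarsucc_D) = Z$ and then apply weak Maskin monotonicity in both directions between $\ordvarsucc_D$ and $\ordsucc_D$.

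\textbf{Step 2 (uniqueness of the stable allocation at the truncated profile).} At $\ordvarsucc_D$, individual rationality forces any stable allocation $W$ to satisfy $W \subseteq Z$. Suppose $W \subsetneq Z$. Then some $z \in Z - W$ belongs to an agent who is unmatched under $W$. I claim $(\hx{z}, Z)$ or a related pair blocks $W$. Here I use that $Z$ is the COM outcome and that $C_H(Z)=Z$ (individual rationality of $Z$, which the COP guarantees). Nonetheless, $C_h(W\cup\{z\})$ need not contain $z$ without substitutability, so this is the step I expect to be the main obstacle. My first attempt is to run the COP at $\ordvarsucc_D$: each agent offers only $\xd{d}{Z}$, so the offer process is observable with range $Z$, and $\cop(\ordvarsucc_D)=C_H(Z)=Z$. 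Next, consider an observable order in which the contracts of $W$ are offered first. By IRC and $C_h(W)=W\cap \gs_h$ (individual rationality of $W$), this ordering is legitimate. Strong OS, comparing processes with ranges $W$ and $Z$, then shows $R_H(W)=\emptyset \subseteq R_H(Z)=\emptyset$. That inequality alone is uninformative. The real argument instead uses the observable process with range $W\cup\{z\}$, a subset of $Z$. Strong OS gives $R_H(W\cup\{z\})\subseteq R_H(Z)=\emptyset$, so $C_h(W\cup\{z\})\ni z$ with $h=\hx{z}$. This yields a block of $W$ by $(h,\{z\})$: all chosen agents weakly prefer the result, since agents in $W$ keep their unique acceptable contract and $\dx{z}$ gains. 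Hence $W=Z$. Since $f$ is stable, $f(\ordvarsucc_D)=Z$.

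\textbf{Step 3 (transfer back).} Because $\ordsucc_d$ is a monotonic transformation of $\ordvarsucc_d$ at $\xd{d}{f(\ordvarsucc_D)}=\xd{d}{Z}$, weak Maskin monotonicity gives $f(\ordsucc_D)\succeq_d Z$ for all $d$. This alone does not yield equality, so I will additionally show $Z\succeq_d f(\ordsucc_D)$ by a symmetric argument. Let $W'=f(\ordsucc_D)$ and truncate $\ordsucc_D$ at $W'$. Step 2 applied there requires $W'$ to equal the COM outcome of the $W'$-truncation, which holds because offering only the contracts of $W'$ gives an observable process with $C_H(W')=W'$. The truncated profile is a monotonic transformation at $W'$, so weak Maskin monotonicity together with Step 2 gives that the COM at the $W'$-truncation equals $W'$. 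I would then apply \Cref{thm:mon} to the COM: $\ordsucc_D$ is not a monotonic transformation of the $W'$-truncation in the needed direction, so instead I invoke the agent-optimality of the COM among stable allocations under strong OS. If that is unavailable, I would use the bound $f(\ordsucc_D)\succeq_d Z$ together with strictness and the fact that any stable $W'$ weakly improving every agent over $Z$ must coincide with $Z$. This last claim would be proven by the same strong OS blocking argument as in Step 2, applied to the contracts each agent offered before reaching $\xd{d}{Z}$ in the COP.
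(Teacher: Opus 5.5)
Your ``if'' direction and Step~2 are fine, but the argument breaks when you transfer from the truncated profile back to $\ordsucc_D$, and both fallbacks you propose are false under strong OS.

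\textbf{The transfer in Step~3 runs in the wrong direction.} In Step~1, the claim that $\ordsucc_d$ is a monotonic transformation of the truncated $\ordvarsucc_d$ at $\xd{d}{Z}$ is wrong. Since $\xd{d}{Z}$ (or $\varnothing$) is top-ranked under $\ordvarsucc_d$, the set $\{w : w \varsucc_d \xd{d}{Z}\}$ is empty, as you yourself say one sentence later. The required inclusion $\{w : w \succ_d \xd{d}{Z}\}\subseteq\emptyset$ therefore holds only if $\xd{d}{Z}$ is already $\ordsucc_d$-top. Weak Maskin monotonicity only lets you move \emph{from} $\ordsucc_D$ \emph{to} a profile that raises $\xd{d}{f(\ordsucc_D)}$. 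It does not let you move back from your truncation, so the bound $f(\ordsucc_D)\succeq_d Z$ is unsupported.

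\textbf{The fallbacks fail.} Your symmetric argument, truncating at $W'=f(\ordsucc_D)$, does run in the legitimate direction. But it only yields $f=\cop=W'$ at the truncated profile, and you then need a global comparison property of the COM. Neither candidate holds. Under strong OS the COM is not agent-optimal among stable allocations. Nor must a stable allocation that every agent weakly prefers to $\cop(\ordsucc_D)$ coincide with it. The paper's own example in \Cref{fact:char:truncation} shows both failures. There $\cop(\ordvvsucc_D)=\{z_1,y_2\}$, while $\{z_1,x_2\}$ is also stable; $d_2$ strictly prefers it and $d_1,d_3$ are indifferent.

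\textbf{The missing idea: a local argument.} The paper uses a local argument that needs no optimality property of the COM.
\begin{itemize}
\item Take a counterexample $\ordsucc_D$ minimizing $\sum_d|\Ac(\ordsucc_d)|$.
\item Pick $d^*$ with distinct non-null contracts $x=\xd{d^*}{f(\ordsucc_D)}$ and $x^\star=\xd{d^*}{\cop(\ordsucc_D)}$; two distinct stable allocations must differ in this way.
\item Minimality gives $f=\cop$ at $\drp{D}{x}$ and at $\drp{D}{x^\star}$.
\end{itemize}
Case $x\succ_{d^*}x^\star$: every contract $d^*$ ranks above $x^\star$ is rejected before $x^\star$ is offered, and by (strong) OS it stays rejected in the COP at $\drp{D}{x^\star}$. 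Hence $x \drp{d^*}{x^\star} f(\drp{D}{x^\star})$, even though $\orddrp{d^*}{x^\star}$ is a monotonic transformation of $\ordsucc_{d^*}$ at $x$. This contradicts weak Maskin monotonicity.

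Case $x^\star\succ_{d^*}x$: the contract $x$ is never offered, so $f(\drp{D}{x})=\cop(\drp{D}{x})=\cop(\ordsucc_D)\ni x^\star$. Moreover, $\ordsucc_{d^*}$ is a monotonic transformation of $\orddrp{d^*}{x}$ at $x^\star$, so weak Maskin monotonicity would force $x\succeq_{d^*}x^\star$, a contradiction.

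Your proof needs this minimal-counterexample, single-contract-drop device, or something equivalent, in place of the global truncation-plus-optimality route.
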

\begin{proof}See \Cref{subsec:proof:uniquness}.\end{proof}

A couple of remarks are in order regarding \Cref{thm:uniqueness}:
First, we cannot replace weak Maskin monotonicity with another property called \emph{truncation-proofness}.
Under unilateral substitutability, this property (along with stability) also characterizes the COM \citep{refall:1189}.
Under strong OS, this is not the case because a non-COM stable mechanism can be truncation-proof.
Second, we cannot replace weak Maskin monotonicity with dropping monotonicity, either.
This is because the former is strictly stronger than the latter for general mechanisms, although they are equivalent for the COM.
As a consequence, a non-COM stable mechanism can satisfy dropping monotonicity while violating weak Maskin monotonicity.
See \Cref{sec:remarks} for more details on the subtleties around \Cref{thm:uniqueness}.

\medskip 

To conclude this section, we argue that 
	the three monotonicity properties can be interpreted as weakenings of strategy-proofness and non-bossiness. 
To fix ideas, consider dropping monotonicity, which can be rewritten as
		\begin{align} \label{eq:alt-def:dropping}
			\left[ Z \cap f \left(\ordvarsucc_D \right) = \emptyset \right] \Rightarrow 
			\left[ f \left(\ordvardrp{D}{Z}\right) \varsucceq_d f (\ordvarsucc_D) \mbox{ for all } d \in D \right].
		\end{align}
Suppose, without loss of generality, that $\dx{Z} = \{d^*\}$ for some agent $d^*$ 
	so that the antecedent then describes a unilateral deviation in which agent $d^*$ drops contracts that are not chosen under $\ordvarsucc_D$.
Then, the strategy-proofness of $f (\cdot) $ requires the consequent of \cref{eq:alt-def:dropping} with $d=d^*$;
	if it fails to hold, agent $d^*$ has an incentive to report $\ordsucc_{d^*}$ when her true preference is $\orddrp{d^*}{Z}$.
Indeed, combined with the symmetric argument for the reverse deviation, strategy-proofness requires $f \left(\ordvardrp{D}{Z}\right) =_{d^*} f(\ordvarsucc_{D})$.

In a strategy-proof mechanism, such deviations therefore do not affect the manipulator’s own assignment. 
Then, non-bossiness requires that the assignments of other agents remain unchanged. 
In contrast, the consequent of \cref{eq:alt-def:dropping} allows some agent $d \neq d^*$ to be strictly better off.
Thus, strategy-proofness and non-bossiness jointly imply dropping monotonicity, and the latter can be seen as a weakening of these two properties. 
Similar arguments apply to IR monotonicity and weak Maskin monotonicity, which impose stronger restrictions by considering a broader class of preference transformations.

		\section{Strong OS and Weak Group Strategy-Proofness} \label{sec:GSP}

In this section, we turn to the \emph{group strategy-proofness} of the COM. 
From the existing literature, either on matching or broader mechanism design, 
	it is known that Maskin monotonicity and non-bossiness are often essential for a mechanism to be (weakly) group strategy-proof
	\citep[e.g.,][]{refall:1587, refall:584, refall:1173,refall:1323,refall:1416}. 
In the previous section, strong OS is characterized by weak Maskin monotonicity. 
We have also argued that the monotonicity properties in \Cref{thm:mon} can be seen as a weakened combination of strategy-proofness and non-bossiness. 
Taken together, these observations would naturally lead us to investigate the implications of strong OS for the group strategy-proofness of the COM.
While there are two standard definitions for group strategy-proofness, we adopt the weaker one as we present below. 
It should be noted that even in the classic model without contracts,
	the COM (or equivalently, the deferred acceptance) meets the stronger version only in highly special cases \citep{refall:1250}.

\begin{definition} \label{def:GSP}
A mechanism $f: \pdom \to \mathscr{A}$ is \emph{weakly group strategy-proof} 
	if there are no $\ordsucc_D, \ordvarsucc_D \in \pdom$ such that 
	$f (\ordvarsucc_D) \succ_d f (\ordsucc_D)$ for all $d \in \{d' \in D: \ordsucc_{d'} \neq \ordvarsucc_{d'} \}$.
\whiteqed
\end{definition}

Our next theorem states that weak group strategy-proofness reduces to individual strategy-proofness for the COM when the choice functions are strongly OS.
The existing literature has identified several sufficient conditions for the COM to be weakly group strategy-proof \citep{refall:1182,refall:1395,refall:1007,refall:1191}.
In contrast, only one condition, the one by \citet{refall:1188}, is known to suffice for strategy-proofness but not for weak group strategy-proofness \citep{refall:1437}.
The following theorem suggests why this is the case: 
	a condition that suffices for strategy-proofness but not for weak group strategy-proofness can arise 
	only when choice functions violate strong OS.\footnote{%
			See also \citet{refall:1326} for the relation between individual and weak group strategy-proofness in a general environment beyond matching markets.
			}

\begin{thm} \label{thm:GSP}
	Suppose that $C_H$ is a strongly OS profile of choice functions.
	Then, the COM  is weakly group strategy-proof if and only if it is strategy-proof.\footnote{
		Since the COM is the unique candidate for a stable and strategy-proof mechanism when $C_H$ is OS \citep{refall:1188},
			we can rephrase the conclusion as follows: a stable mechanism $f (\cdot)$ is weakly group strategy-proof if and only if it is strategy-proof.
	}
\end{thm}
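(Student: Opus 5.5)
The ``only if'' direction is immediate: a profitable unilateral deviation by $d$ is a coalitional deviation by $\{d\}$, so weak group strategy-proofness implies strategy-proofness. For the ``if'' direction, I will assume that $\cop$ is strategy-proof and, toward a contradiction, that a coalition $S=\{d : \ordsucc_d\neq\ordvarsucc_d\}$ gains strictly: $\cop(\ordvarsucc_D)\succ_d \cop(\ordsucc_D)$ for every $d\in S$. The tools are \Cref{thm:mon}, which under strong OS makes $\cop$ dropping monotonic (indeed weakly Maskin monotonic), together with individual strategy-proofness.

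Write $Y=\cop(\ordvarsucc_D)$ and $y_d=\xd{d}{Y}$. Each $d\in S$ strictly prefers $y_d$ to her assignment under $\ordsucc_D$, so $y_d\neq\varnothing$ and $y_d\succ_d\varnothing$ under her true preference. I will replace the misreports by a canonical truncation. Let $\ordsucc'_d$ rank $y_d$ as the unique acceptable contract for $d\in S$, and let $\ordsucc'_d=\ordsucc_d$ otherwise. Relative to $\ordvarsucc_D$, this profile is a monotonic transformation at $Y$ for each $d\in S$, since nothing is ranked above $y_d$. For $d\notin S$ it is the identity. Weak Maskin monotonicity (\Cref{thm:mon}(ii)) then gives $\cop(\ordsucc'_D)\succeq_d Y$ under $\ordsucc'_d$ for every $d$. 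For $d\in S$ this forces $\cop(\ordsucc'_D)=_d y_d$, because $y_d$ is $d$'s only acceptable contract. So we may assume that every deviator reports a one-contract list containing her target.

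Next I will undo the deviation one agent at a time. Order $S=\{d^1,\dots,d^k\}$ and let $\ordsucc^{j}_D$ be the profile in which $d^1,\dots,d^j$ report truthfully and the others report $\ordsucc'$. The aim is to show by induction on $j$ that each $d\in S$ still weakly gets $y_d$ or better under her true preference at every stage. When $d^j$ switches from the truncation to her true $\ordsucc_{d^j}$, strategy-proofness says she cannot do worse than $y_{d^j}$, since reporting the truncation would have given her $y_{d^j}$ at the previous stage.

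The obstacle is controlling the others. Moving from the truncation to the true preference is not a dropping transformation; it is the reverse of one. Say $d^j$ receives some $z\succeq_{d^j} y_{d^j}$. My plan is to compare the true-preference profile with the profile in which $d^j$ truncates to $z$. Strategy-proofness gives her $z$ in both. Truncating to $z$ is a dropping transformation of her true preference at $z$ (drop everything except $z$), so dropping monotonicity says the others weakly gain when she truncates. To get the direction needed, that the others are not hurt when she reverts, I will use a non-bossiness-type consequence: apply the rewriting \cref{eq:alt-def:dropping} together with strategy-proofness to both profiles and compare them. I would try to show that the two COPs produce the same outcome. The idea is that with strong OS the contracts above $z$ are rejected in the fuller process, so by IRC and strong OS they do not change the rejections that other agents face. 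This is precisely where strong OS, and not merely OS, is required.

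At the final stage $j=k$, every $d\in S$ weakly prefers the outcome to $y_d$ and hence strictly prefers it to $\cop(\ordsucc_D)$. But this profile equals $\ordsucc_D$, which is a contradiction.
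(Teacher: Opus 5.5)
Your opening reduction is sound. Giving each $d\in S$ the one-contract report $\{y_d\}$ is a monotonic transformation of $\varsucc_D$ at $\cop(\varsucc_D)$, so weak Maskin monotonicity keeps $y_d$ for every deviator.

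\textbf{The gap is in the reverting induction.} The equality you plan to prove is that the COM outcome is the same whether $d^j$ reports her true $\succ_{d^j}$ (receiving $z$) or truncates to $z$. This is false in general. It fails even for classical deferred acceptance, which is the COM with responsive (hence strongly OS) choice functions and is strategy-proof. Take agents $1,2,3$ and unit-capacity institutions $a,b,c$. Let preferences be $b\succ_1 a\succ_1 c$, $a\succ_2 b\succ_2 c$, and $a\succ_3 c$ (only these acceptable for $3$). Let priorities be $1>3>2$ at $a$ and $2>1>3$ at $b$.
\begin{itemize}
\item Truthful reports give $\{(1,a),(2,b),(3,c)\}$.
\item If $3$ lists only $c$, the outcome is $\{(1,b),(2,a),(3,c)\}$.
\end{itemize}
Agent $3$ gets $c$ either way. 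But her offer to $a$, though eventually rejected, starts a rejection chain that strictly hurts $1$ and $2$. Strong OS only makes rejections grow with the offered set; it gives no invariance.

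Indeed, dropping monotonicity tells you that reverting from the truncation to the true preference can only weakly \emph{hurt} everyone else, and the hurt can be strict. Once a still-truncated deviator $d^i$ is pushed to $\varnothing$, strategy-proofness at her own reverting step gives no bound relative to $y_{d^i}$, and the induction collapses. Even granting the equality, there is a second problem: when $z\succ_{d^j}y_{d^j}$, the truncation-to-$z$ profile is not the stage-$(j-1)$ profile, so nothing links the two stages.

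\textbf{The missing idea is to move in the other direction:} modify the \emph{true} profile rather than the reports. Use \Cref{lem:non-bossy:OS+SP}: under OS and strategy-proofness, changing $d$'s preference while preserving the upper contour set at her COM assignment leaves the whole outcome unchanged.
\begin{itemize}
\item Since $y_d\succ_d\cop(\succ_D)$, raising $y_d$ to the top of $\succ_d$ for each $d\in S$ preserves that upper contour set. So the COM outcome stays $\cop(\succ_D)$.
\item The resulting profile is a monotonic transformation of $\varsucc_D$ at $\cop(\varsucc_D)$. So \Cref{thm:mon} forces every $d\in S$ to receive $y_d$ at $\cop(\succ_D)$, a contradiction.
\end{itemize}
Your singleton reduction is then unnecessary. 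This is essentially the paper's proof. The paper instead lowers, in each deviator's true preference, the contracts ranked above $\cop(\varsucc_D)$ truthfully but below it in the misreport. It keeps the outcome by \Cref{lem:non-bossy:OS+SP} and concludes with \Cref{lem:sOS}.
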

\begin{proof} See \Cref{subsec:proof:GSP}. \end{proof}

As a corollary, we obtain a new sufficient condition for the COM to be weakly group strategy-proof, thereby generalizing the aforementioned results in the literature. 
\citet[Theorem 7]{refall:1401} establish that the COM is strategy-proof if the choice functions meet \emph{strong observable size-monotonicity}, to be defined below, in addition to strong OS. 
Combined with \Cref{thm:GSP} above, we can conclude the same condition is also sufficient for weak group strategy-proofness.

\begin{definition} \label{def:sOSM}
	A profile $C_H $ of choice functions is \emph{strongly observably size-monotonic} (for short, \emph{strongly OSM}) if 
		for any two observable offer processes $\op{x}$ and $\op{y}$ at $C_H$, 
		$ \set (\op{x}) \subseteq \set (\op{y})$ implies $ \# C_H (\set (\op{x})) \leq  \# C_H (\set (\op{y}))$.
\whiteqed
\end{definition}

\begin{thm} \label{cor:suff} 
The COM is weakly group strategy-proof if 
	the profile $C_H$ of choice functions is strongly OS and strongly OSM. 
\end{thm}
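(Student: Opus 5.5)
This follows by combining \Cref{thm:GSP} with the result cited just before \Cref{def:sOSM}. The plan is to check that its hypotheses match ours and then chain the two implications. First, since strong OS implies OS and OS implies OSaA, the standing assumption of OSaA holds. The COM is therefore well defined and always outputs an allocation, so both \Cref{thm:GSP} and the cited result apply to the same mechanism $\cop$.

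Next, I would invoke \citet[Theorem 7]{refall:1401}: if $C_H$ is strongly OS and strongly OSM, then the COM is (individually) strategy-proof. Our hypothesis is exactly this pair of conditions, so $\cop$ is strategy-proof. I would verify that strong OSM in \Cref{def:sOSM} is stated for profiles in the same way as in that paper. If the cited theorem is stated per institution, I would reduce to it using the footnoted equivalence between profile-level and component-wise conditions. That reduction should go through because observability is defined through the aggregate $C_H$. The cardinality of $C_H(\set(\op{x}))$ is the sum of the institution-level cardinalities, so the profile-level condition is implied by the component-wise one.

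Finally, since $C_H$ is strongly OS, \Cref{thm:GSP} says the COM is weakly group strategy-proof if and only if it is strategy-proof. Applying the ``if'' direction to the strategy-proofness just obtained gives weak group strategy-proofness.

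The only possible obstacle is the bookkeeping step of aligning the definitions used in the cited result with those here: the notion of observability, and profile versus individual choice functions. The substance is carried entirely by \Cref{thm:GSP}, so no new argument about the cumulative offer process should be needed.
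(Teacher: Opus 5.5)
Your proposal is correct and is the same argument as the paper's proof: strategy-proofness of the COM under strong OS and strong OSM comes from \citet[Theorem 7]{refall:1401}, and the ``if'' direction of \Cref{thm:GSP} then upgrades it to weak group strategy-proofness. One small slip in your optional bookkeeping aside: to apply a per-institution version of the cited theorem you need the profile-level conditions to imply the component-wise ones (for instance, by restricting to offer processes that involve only $h$'s contracts), whereas your summing-of-cardinalities argument proves the converse; the footnoted equivalence, which you also invoke, covers the needed direction anyway.
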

\begin{proof} See \Cref{subsec:proof:suff}. \end{proof}

\Cref{thm:GSP} above establishes the sufficiency of strong OS for the equivalence between individual and weak group strategy-proofness. 
At the same time, OS is known to be insufficient for this equivalence; i.e., there exists a profile of choice functions such that COM is strategy-proof but not weakly group strategy-proof \citep{refall:1437}. 
It would then be natural to ask if strong OS is necessary for the equivalence. 
The answer to this question turns out to be negative.
Indeed, strong OS is not even \emph{``almost'' necessary} for weak group strategy-proofness to boil down to individual strategy-proofness. 
To be more precise, we introduce the following definition:

\begin{definition}
Let $\mathcal{M}^0 = \left( D^0, \{h\}, X^0, C_h^0 \right)$ be a single-institution market. 
A market $\mathcal{M} = \left( D, H, \gs, C_H \right)$ is said to be an \emph{extension} of $\mathcal{M}^0$ if 
\begin{itemize}
	\item 	$D \supset D^0$ and $H \ni h$, 
	\item 	$\{ x \in \gs: \hx{x} = h\} = X^0$, and 
	\item 	$C_H = (C_i)_{i \in H}$ meets $C_h = C^0_h$. 
\end{itemize} 
An extension $\mathcal{M}$ of $\mathcal{M}^0$ is said to be \emph{regular} if for each $i \in H-\{h\}$, its choice function $C_i (\cdot)$ meets strong OS and strong OSM.	\whiteqed
\end{definition}

Following the convention in the literature, we would say that strong OS is ``almost necessary'' for a formal proposition if the following is true:
For any single-institution market where the choice function $C_h^0$ violates strong OS, there exists its regular extension such that the proposition is false. 
Notice that each additional $C_i (\cdot)$ is strong OS and strong OSM in our definition of a regular extension,
	whereas the literature typically requires unit-demand choice functions \citep[e.g.,][]{refall:641,refall:711,refall:1188}. 
This difference makes strong OS \emph{easier} to be ``almost necessary'' under our definition, because more extensions qualify as regular. 
In turn, it makes our result \emph{stronger} because it negates ``almost necessity.''

Actually, \Cref{thm:anti-nec} below is even stronger than the simple negation of ``almost necessity'' for the equivalence between individual and weak group strategy-proofness.
Note that the equivalence statement becomes false if and only if the COM \emph{is} strategy-proof but not weakly group strategy-proof. 
Therefore, a choice function is a counterexample against ``almost necessity'' if 
	it is not strongly OS and for any regular extension, either (i) the COM is not strategy-proof or (ii) it is weakly group strategy-proof. 
\Cref{thm:anti-nec} is stronger in constructing a counterexample where the second case always holds. 
In other words, strong OS is \emph{not ``almost necessary'' for weak group strategy-proofness.}

\begin{thm} \label{thm:anti-nec}
Strong OS is not ``almost necessary'' for the weak group strategy-proofness of the COM in the following sense:
	There exists a single-institution market $\mathcal{M}^0$ 
	such that (i) the sole institution's choice function is not strongly OS and (ii) the COM is weakly group strategy-proof in any regular extension $\mathcal{M}$ of $\mathcal{M}^0$. 
\end{thm}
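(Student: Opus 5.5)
We will prove \Cref{thm:anti-nec} by an explicit construction. The plan is to pick a single-institution market $\mathcal{M}^0$ whose choice function violates strong OS, yet in such a way that the violation can never be turned into a profitable group deviation. The natural candidate is a small variant of \Cref{ex:not-sOS}. Its violation of strong OS lives on a pair of observable processes, $(x_1,y_2,x_3)$ and $(x_1,x_2,x_3,y_3,y_2)$, that are reached under \emph{different} preference profiles. We want the preferences generating the second process to be ones under which no coalition can gain. Concretely, we would design $C_h^0$ so that two properties hold.
\begin{itemize}
\item[(a)] $C_h^0$ is OS and strongly OSM along every single path, so the cumulative offer process at $h$ is well behaved along any run.
\item[(b)] Whenever both processes of a strong-OS violation are observable, the smaller process already leaves the agents in $\dx{\cdot}$ of the larger one with their ``good'' contracts. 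Any profile reaching the bad comparison must then involve agents whose COM assignment is already top-ranked among the contracts they could obtain.
\end{itemize}

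The key steps are as follows. First, verify by a finite table (as in \Cref{table:not-sOS}) that $C_h^0$ is OSaA, OS, and not strongly OS. Second, fix an arbitrary regular extension $\mathcal{M}$. The other institutions are strongly OS and strongly OSM, so their behaviour obeys the monotonicity of \Cref{thm:mon} and the strategy-proofness results of \citet{refall:1401}. We then reduce group manipulation to what happens at $h$. Suppose a coalition $S$ misreports to $\ordvarsucc_S$ and every member strictly gains. Following the usual weak-GSP argument (as in the proof of \Cref{thm:GSP}), we replace each member's report by a truncation/dropping-type report that ranks her new contract first. This is permitted by the monotonicity properties on the part of the market that is strongly OS. We thereby obtain a profile where every deviator's gain is realised in a cumulative offer process in which deviators only drop contracts.

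Third, compare the two cumulative offer processes (truthful and manipulated). At institutions $i\neq h$, rejected sets are monotone across the two runs by strong OS. So any strict gain for all deviators must exploit a non-monotonicity at $h$, namely a contract rejected in the truthful run but chosen in the manipulated run. By the design property (b), such a reversal at $h$ requires some agent, among $d_2,d_3$, to move from her preferred contract ($y_2$ or $y_3$) to a worse one, or requires a deviator who does not strictly gain. Either case contradicts the assumption that all members of $S$ strictly improve. We would organise this as a case analysis over the four observable triples and their continuations, using the fact that the COM outcome is stable in $\mathcal{M}$ by OSaA.

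The main obstacle is the third step. Arbitrary regular extensions allow rich interaction: other institutions can route agents $d_1,d_2,d_3$ and outside agents through many offer orders. We must show that the only $h$-level non-monotonicity is ``self-defeating'' for the coalition, regardless of the other institutions. My first attempt would be to choose $C_h^0$ so that the violating contract ($x_1$, rejected in the small set but chosen in the large set) belongs to an agent who, in every scenario producing the large set, ends up worse off. This would make $d_1$ unable to be a strictly gaining coalition member. If that fails, I would simplify $C_h^0$ further, perhaps to two agents, until the case check closes.
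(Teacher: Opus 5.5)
Your proposal is a plan rather than a proof, and the steps that would carry the weight do not go through.

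\textbf{Step two fails.} Your reduction appeals to monotonicity ``on the part of the market that is strongly OS.'' Monotonicity is a property of the COM as a whole mechanism, however. In every extension $\mathcal{M}$, the processes $(x_1,y_2,x_3)$ and $(x_1,x_2,x_3,y_3,y_2)$ remain observable at $C_H$, with rejected sets $\{x_1\}$ and $\{x_2,x_3\}$. So $C_H$ is not strongly OS, and by \Cref{thm:mon} the COM in $\mathcal{M}$ violates even dropping monotonicity.

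Borrowing the reduction from the proof of \Cref{thm:GSP} does not help either. That reduction rests on \Cref{lem:non-bossy:OS+SP}, which presupposes that the COM in $\mathcal{M}$ is strategy-proof. This is not known a priori: it is a consequence of the theorem, and the known sufficient condition (strong OS plus strong OSM) fails in $\mathcal{M}$. Its final step also uses \Cref{lem:sOS}, which needs strong OS of the entire profile.

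\textbf{Step three misreads strong OS and is left open.} You claim that at $i\neq h$ rejected sets are ``monotone across the two runs by strong OS.'' Strong OS only compares observable processes with nested ranges, and the truthful and manipulated runs are generally not nested; handling non-nested runs is exactly what \Cref{lem:sOS} is for. Beyond that, the decisive third step is left unresolved, and $\mathcal{M}^0$ and property (b) are never pinned down. So nothing is established for an arbitrary regular extension.

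\textbf{The missing idea is a transfer argument} that avoids analyzing the extension's offer dynamics altogether. The paper keeps $C^0_h$ exactly as in \Cref{ex:not-sOS} and builds three auxiliary choice functions for $h$, each strongly OS and strongly OSM: $C^{-x_2}_h$ (ignore $x_2$), $C^{\vvsucc}_h$ (insert $\{x_1,y_2\}$ just above $\{y_2,x_3\}$), and $C^{\vvvsucc}_h$ (replace the top tripleton by $\{y_2,y_3\}$). Substituting any of them into $\mathcal{M}$ gives a market where every choice function is strongly OS and strongly OSM. Its COM is therefore weakly group strategy-proof by \Cref{cor:suff}, whatever the other institutions are.

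One then shows that $C^0_h$ agrees with the relevant auxiliary function on all menus reached along the two complete offer processes of a putative violation. This uses three facts:
\begin{enumerate}
\item $C^0_h$ never rejects an offered $y_2$ along an observable process.
\item $C^0_h$ and $C^{\vvsucc}_h$ differ on observable menus only when the $X^0$-part is $\{x_1,y_2,x_3\}$.
\item $C^0_h$ and $C^{\vvvsucc}_h$ differ only on menus containing $\{y_2,y_3\}$.
\end{enumerate}
A short case analysis then turns any violation in $\mathcal{M}$ into a violation for one of the auxiliary COMs, a contradiction. The cases are $d_2$'s ranking of $x_2$ versus $y_2$ under the two profiles, and which profile's run reaches $\{x_1,y_2,x_3\}$. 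Some such device is needed; your direct route through the extension's dynamics has no way to close.
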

\begin{proof} See \Cref{subsec:proof:anti-nec}. \end{proof}

		\section{Proofs} \label{sec:proofs}

In this section, we provide the proofs of \Crefrange{thm:mon}{thm:anti-nec}.
Before doing so, we introduce one more definition and three lemmas. 
Given a profile $C_H$ of choice functions, 
	an observable offer process $\op{x}$ is \emph{complete at a preference profile $\ordsucc_D$} if it is final in the COP with $(\ordsucc_D, C_H)$.
More formally, $\op{x} $ is complete at $\ordsucc_D$ if it is observable at $C_H$ and 
	\begin{align*}
		\Bigl[ d \in \dx{C_H (\set (\op{x}))} \mbox{ or } \Ac (\ordsucc_d) \subseteq \set (\op{x}) \Bigr] \mbox{ for all } d \in D. 
	\end{align*}
We will exploit the following lemmas through the main proofs.
While they are straightforward, we provide the proofs of these lemmas for the reader's sake in \Cref{sec:omitted_proofs}.

\begin{lem} \label{lem:sOS}
	Suppose that $C_H$ is a strongly OS profile of choice functions.
	Let $\op{x}$ and $\op{y}$ be complete offer processes at $(\ordsucc_D, C_H)$ and $(\ordvarsucc_D, C_H)$, respectively.
	Suppose further that $\Delta_R := R_H (\set(\op{x})) - R_H (\set(\op{y}))$ is non-empty.
	Then, there exists $x^* \in \set (\op{x}) - \set(\op{y})$ such that either 
			[1] $x^* \not\in  \Ac \left( \ordvarsucc_{\dx{x^*}} \right)$ or 
			[2] $y^* \not\succ_{\dx{x^*}} x^*$ and $y^* \varsucc_{\dx{x^*}} x^*$,
			where $y^*$ is the (non-null) contract $\dx{x^*}$ signs at $\cupch (\set (\op{y}))$. 
\end{lem}

\begin{lem} \label{lem:weak:COP}
	Suppose that $C_H$ is a profile of OS choice functions. 
	For any preference profile $\ordvarsucc_D \in \pdom$ and any non-null contract $w \in \gs$, the following holds:
		\begin{enumerate}[label={(\alph*)}]
		\item \label{sublem:weak:COP:unpref}
						if $\dx{w}$ prefers $\cop( \ordvarsucc_D)$ to $w$,  
						then $\cop \left(\ordvardrp{D}{w} \right) = \cop ( \ordvarsucc_D)$; 
		\item 	\label{sublem:weak:COP:chosen}
						if $w$ is chosen at $\ordvarsucc_D$ (i.e., if $w \in \cop ( \ordvarsucc_D)$), 
						then $\cop ( \ordvarsucc_D) \varsucc_{\dx{w}} \cop (\ordvardrp{D}{w})$; and 
		\item 	\label{sublem:weak:COP:least}
						if $w$ is the worst acceptable contract for $\dx{w}$, 
						i.e., if $w' \varsucceq_{\dx{w}} w$ for all $w' \in \Ac \left( \ordvarsucc_{\dx{w}} \right) $, 
						then $\cop \left( \ordvardrp{D}{w} \right) \varsucceq_d \cop (\ordvarsucc_D)$ for all $d \neq \dx{w}$. 
		\end{enumerate} 
\end{lem}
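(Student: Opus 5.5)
The plan is to compare cumulative offer processes run at $\ordvarsucc_D$ and at $\ordvardrp{D}{w}$ by splicing one run into the other. Since $C_H$ is OS, and hence OSaA, the COM outcome does not depend on the order in which eligible agents are chosen. So it suffices to exhibit \emph{some} COP run at each profile.

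First I record the basic consequence of OS along a single run $(x_1,\dots,x_n)$. Rejected sets increase along the run, so a contract rejected at step $t$ stays rejected at every later step. It follows that each agent $d$ offers her contracts in the order of her preference and is never rejected by a contract she has already left behind. Hence at every step she holds either nothing or her most recent offer. In particular, an agent's final assignment is her last offer if she is matched, and $\varnothing$ otherwise. Also, if an agent $d$ is unmatched at some step, then none of her previously offered contracts is chosen at the end. This bookkeeping is the step I expect to need the most care, because everything else rests on it.

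For \ref{sublem:weak:COP:unpref}, let $d=\dx{w}$. Take any COP run at $\ordvarsucc_D$, and suppose $\cop(\ordvarsucc_D)\varsucc_d w$. By the observation above, $d$'s offers run down her list and stop at her final contract. That contract lies strictly above $w$, so $w$ is never offered. I then check that the same sequence is a valid COP run at $\ordvardrp{D}{w}$. At each step the proposer makes the same offer, because $w$ was not her best remaining contract. Moreover, whenever $d$ is unmatched she still has her eventual final contract left to offer, since it cannot have been offered and rejected already. Thus the sets $D_t$ coincide at the two profiles. Termination also coincides, since $d$ is matched at the end. Therefore the outcomes are equal. (If $w\notin\Ac(\ordvarsucc_d)$ the claim is trivial.)

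For \ref{sublem:weak:COP:chosen}, run the COP at $\ordvardrp{D}{w}$. Suppose, for contradiction, that $d=\dx{w}$ ends with some $z\varsucceq_d w$. Then $z\varsucc_d w$, since $w$ is not available. In that run $d$ never reaches the position of $w$ in $\ordvarsucc_d$, so, exactly as in \ref{sublem:weak:COP:unpref}, the run is also a complete COP run at $\ordvarsucc_D$. This gives $\cop(\ordvarsucc_D)=\cop(\ordvardrp{D}{w})\not\ni w$, contradicting the hypothesis by uniqueness of the COM outcome. Hence $d$ is strictly worse off.

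For \ref{sublem:weak:COP:least}, run the COP at $\ordvardrp{D}{w}$ to completion, obtaining a process $\op{y}$. If $d=\dx{w}$ is matched at the end, $\op{y}$ is complete at $\ordvarsucc_D$ and the outcomes coincide. Otherwise $d$ has exhausted every acceptable contract except $w$. So at $\ordvarsucc_D$ the run can continue with $d$ offering $w$, which is observable since $d$ is unmatched, and then proceed as a COP until it terminates. The resulting run extends $\op{y}$. By the single-path observation, every agent $d'\neq d$ offers during the continuation only contracts below her last offer in $\op{y}$. Her final assignment is her last offer if she is matched, and $\varnothing$ otherwise, and a rejected contract stays rejected. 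Therefore $\cop(\ordvardrp{D}{w})\varsucceq_{d'}\cop(\ordvarsucc_D)$ for every $d'\neq d$.
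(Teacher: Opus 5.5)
Your proof is correct. For parts (a) and (c) it follows the paper's argument. In (a) you show $w$ is never offered, so a complete run at $\ordvarsucc_D$ is also complete at $\ordvardrp{D}{w}$. In (c) the run at $\ordvardrp{D}{w}$ is either already complete at $\ordvarsucc_D$, or it is extended by letting $\dx{w}$ offer $w$, after which OS along the single extended path finishes the argument.

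Part (b) is where you take a different route. The paper argues directly. It keeps the prefix $(y_1,\dots,y_{t-1})$ of a run at $\ordvarsucc_D$ up to the step where $w$ is offered, and reruns the process at $\ordvardrp{D}{w}$ from there. It then uses OS: every contract $\dx{w}$ prefers to $w$ was already rejected from $\{y_1,\dots,y_{t-1}\}$, so it stays rejected. You instead run the splice of (a) in reverse and argue by contradiction. If $\dx{w}$ ended with some $z \varsucc_{\dx{w}} w$ at $\ordvardrp{D}{w}$, that run would be complete at $\ordvarsucc_D$, giving $\cop(\ordvarsucc_D)=\cop(\ordvardrp{D}{w})\not\ni w$.

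Your route makes (a), (b), and the first case of (c) instances of one observation: if $\dx{w}$'s final contract at either profile lies strictly above $w$, the two outcomes coincide. The paper's route is constructive and exhibits the common prefix. Both rely equally on the order-independence of the COP under OSaA. You are also more explicit than the paper about the single-path bookkeeping (an agent holds at most her latest offer, and an unmatched agent's past offers stay rejected) and about the equality of the eligible sets $D_t$.

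The one check you leave implicit is that in (c) the process $\op{y}$ is a legitimate COP prefix at $\ordvarsucc_D$. It is: $w$ is never offered in $\op{y}$ and is $\dx{w}$'s worst acceptable contract. So proposers make the same offers, and at every step the eligible set at $\ordvarsucc_D$ contains the one at $\ordvardrp{D}{w}$.
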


\begin{lem} \label{lem:non-bossy:OS+SP}
	Let $C_H$ be an OS profile of choice functions, and suppose that $\cop(\cdot) $ is strategy-proof.
	For any $\ordsucc_D, \ordvarsucc_D \in \pdom$, then, 
	$ \cop (\ordsucc_D) = \cop (\ordvarsucc_D)$ holds if there is $d \in D$ such that 
		\begin{itemize}
		\item $\cop (\ordsucc_D) \varsucceq_d \varnothing$, 
		\item $\{ x \in \gs: x \varsucc_d \cop (\ordsucc_D) \} = \{ x \in \gs: x \succ_d \cop (\ordsucc_D) \}$, and 
		\item  $\ordvarsucc_{d'} =\ordsucc_{d'} $ for all $d' \in D - \{d\}$.
		\end{itemize} 
\end{lem}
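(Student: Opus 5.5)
The plan is to reduce both profiles, by dropping, to profiles in which agent $d$'s acceptable set is the same and her outcome contract is ranked last, and then to show that reordering $d$'s strictly better (and ultimately rejected) contracts does not affect the COM. Write $x := \xd{d}{\cop(\ordsucc_D)}$ (possibly $\varnothing$) and $U := \{ w \in \gs : w \succ_d x \} = \{ w \in \gs : w \varsucc_d x\}$.

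\emph{Step 1: $d$ receives the same contract under both profiles.} Suppose $\xd{d}{\cop(\ordvarsucc_D)} \neq x$. Strategy-proofness with true preference $\varsucc_d$ (misreport $\succ_d$) gives $\cop(\ordvarsucc_D) \varsucceq_d x$, so $\cop(\ordvarsucc_D) \varsucc_d x$, i.e.\ $d$'s contract lies in $U$. But then, with true preference $\succ_d$, reporting $\varsucc_d$ gives $d$ a contract in $U$, which she strictly prefers to $x$. This contradicts strategy-proofness, so $\cop(\ordvarsucc_D) =_d \cop(\ordsucc_D) = x$.

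\emph{Step 2: truncate below $x$.} I would apply \Cref{lem:weak:COP}\ref{sublem:weak:COP:unpref} repeatedly, dropping one at a time every acceptable contract of $d$ that she ranks below $x$. This works for $\succ_d$ and also for $\varsucc_d$, since $x$ is $\varsucc_d$-acceptable by the first bullet and Step 1 shows $d$ gets $x$ there too. Each drop leaves the outcome unchanged, because $d$ prefers the current outcome to the dropped contract. This yields profiles $\ordsucc'_D$ and $\ordvarsucc'_D$ with
\begin{align*}
\cop(\ordsucc'_D)=\cop(\ordsucc_D) \quad \text{and} \quad \cop(\ordvarsucc'_D)=\cop(\ordvarsucc_D),
\end{align*}
in which $d$'s acceptable set is exactly $U\cup\{x\}$ (just $U$ if $x=\varnothing$), with $x$ ranked last. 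The two truncated preferences differ only in how they order $U$, and all other agents' preferences coincide.

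\emph{Step 3: reordering $U$ (the main obstacle).} It remains to show $\cop(\ordsucc'_D)=\cop(\ordvarsucc'_D)$, which is genuinely a non-bossiness claim. I would pass from $\succ'_d$ to $\varsucc'_d$ by adjacent transpositions within $U$ and prove invariance under one swap of consecutive contracts $u,u' \in U$. For a single swap I would use order-independence of the COP under OSaA together with OS monotonicity of rejections along a path. Since $d$ ends with $x$, which is ranked last, every contract in $U$ is offered and rejected in the final pool.

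The idea is to take the complete COP $\op{x}$ at $\ordsucc'_D$, choose the proposer order so that $d$ offers $u$ and then $u'$ consecutively (possible because $d$ is unmatched right after $u$ is rejected, and can be kept as the proposer), and exchange these two entries. Then one verifies that the new sequence is observable at $C_H$ and complete at the swapped profile, with the same set of contracts and hence the same outcome. The delicate point is observability after the exchange. If $u'$ is offered first, $d$ may \emph{hold} $u'$ for a while, and the held sets in between could differ. If the direct exchange fails, the fallback is to use strategy-proofness again. By Step 1 applied to the swap, $d$'s assignment stays $x$, and comparing the two final pools via OS should show that other agents' contracts rejected along one path are rejected along the other. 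This is the step I expect to require the most care.
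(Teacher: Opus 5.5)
Your Steps 1 and 2 are correct; Step 1 is exactly how the paper's proof begins. The gap is Step 3, which carries the whole content of the lemma and is not established.

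\textbf{Why the proposed routes fail.} The direct exchange does not work in general. When $d$ offers $u$, it is typically \emph{held} while other agents move, so $u$ and $u'$ cannot be made consecutive. Once $u'$ is offered in place of $u$, the held sets change, so do the eligible proposers, and so does the rest of the process. Your fallback, ``comparing the two final pools via OS,'' compares two different offer processes, but OS only constrains rejections along a single process; cross-path comparison is what strong OS would add.

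More fundamentally, your Step 3 uses only two facts: OS, and that $d$ gets the same contract at both profiles. These do not suffice, as the following example shows.
\begin{itemize}
\item Take one institution and agents $d,e,f$ with contracts $u,u'$ (of $d$), $a,b$ (of $e$), and $c,c'$ (of $f$).
\item The institution chooses the best available pair in the order $\{b,c\},\{a,c'\},\{a,u'\},\{u,c\},\{a,c\},\{a,u\},\{c,u'\},\{b,c'\},\{b,u\},\{b,u'\},\{c',u\},\{c',u'\}$. All pairs rank above all singletons, and all triples are unacceptable.
\item Every observable process has at most five offers, and one checks directly that no rejected contract is ever chosen again. So this choice function is OS.
\item Let $a\succ_e b$ and $c\succ_f c'$. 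If $u\succ_d u'$, the COM yields $\{b,c\}$. If $u'\succ_d u$, it yields $\{a,c'\}$.
\item In both cases $d$ is unmatched, with the same upper contour set $\{u,u'\}$.
\end{itemize}
So the reordering invariance claimed in Step 3 is false under OS alone. The lemma survives only because the COM here is manipulable at a \emph{different} profile. Drop $b$ from $e$'s list: then $d$ with $u'\succ_d u$ is unmatched, but obtains $u$ by reporting $u\succ_d u'$.

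\textbf{The missing idea.} You need to exploit strategy-proofness of an agent other than $d$, at profiles obtained by dropping that agent's contracts. The paper does this as follows.
\begin{enumerate}
\item Take a counterexample that minimizes the number of acceptable contracts.
\item Since $\succ_d$ and $\varsucc_d$ share the upper contour set at $d$'s assignment, $\cop(\succ_D)$ is also stable at $\varsucc_D$.
\item Lemma 2 of \citet{refall:1324} then yields an agent $d^*\neq d$ who holds distinct non-null contracts in the two outcomes.
\item Drop $d^*$'s contracts ranked below the better of her two contracts. By OS, the outcome on the side where she holds the better contract is unchanged.
\item Strategy-proofness for $d^*$ forbids her from obtaining that better contract on the other side, so the two outcomes still differ.
\item This gives a strictly smaller counterexample, a contradiction.
\end{enumerate}
Your truncation in Step 2 is harmless but buys nothing: the example above is already in truncated form.
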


\subsection{Proof of \Cref{thm:mon}} \label{subsec:proof:COM-mon}

\subsubsection{Proof of (i) $\bm{\Rightarrow}$ (ii) $\bm{\Rightarrow}$ (iii)}
	As we argued earlier, this is immediate from the definitions of the three properties. 
\qed


\subsubsection{Proof of (iii) $\bm\Rightarrow$ (iv)}
Suppose that the COM meets \mon. 
Towards a contradiction, suppose also that $C_H$ is not strongly OS; 
	i.e., there are two observable offer processes $\op{x} = (x_1, \dots, x_m)$ and $\op{y} = (y_1, \dots, y_n)$ such that
	$\set (\op{x}) \subseteq \set (\op{y})$ and $R_H(\set (\op{x})) - R_H(\set (\op{y}))$ is non-empty.
Without loss of generality, assume that the length of $\op{x}$ we will consider is the shortest among any such pair of offer processes.  
In particular, this assumption entails for each $m' \in \{1, \dots, m-1\}$,
	\begin{align} \label{eq:shortest}
		\RH{x}{m'} \subseteq \Bigl[ \RH{x}{m'+1} \cap \RHop{y} \Bigr].
	\end{align}
To see this, suppose $\RH{x}{m'}$ is not a subset of $\RHop{y}$ for some $m' < m$.
Then, $\op{x}' := ( x_1, \dots, x_{m'})$ is shorter than the original $\op{x}$, $\set (\op{x}') \subseteq \set (\op{y})$, and $R_H(\set (\op{x}')) - R_H(\set (\op{y}))$ is non-empty; this contradicts the assumption of $\op{x}$ being the shortest.
Similarly, if $\RH{x}{m'} \not\subseteq \RH{x}{m'+1}$,
				$\op{x}' := (x_1, \dots, x_{m'} )$ together with $\op{y}' := (x_1, \dots, x_{m'+1} )$ should be a shorter violation of strong OS.
Therefore, we should have \cref{eq:shortest}.

Through the rest of the proof, arbitrarily fix an index $k \leq m$ such that $x_k \in \RHop{x} - \RHop{y}$.
By the shortest-length assumption, 	agent $\dx{x_k}$ signs no non-null contract at $\CHop{x}$ for the following reasons:
	If there is $x_{k'} \in \CHop{x}$ such that $\dx{x_{k'}} = \dx{x_k}$, by \cref{eq:shortest}, it cannot be ever rejected along the process $\op{x}$. 
	This implies $k < k'$, and hence, there is some $m' \in \{k, \dots, k'-1\}$ such that $x_k \in \RH{x}{m'}$. 
	If so, however, $\op{x}' := (x_1, \dots, x_{m'})$ should satisfy both $\set (\op{x}') \subseteq \set (\op{y})$ and $x_k \in \RHop{x'} - \RHop{y}$; this contradicts the shortest-length assumption on $\op{x}$.

Before we proceed, we introduce two more definitions:
Let $\ordsucc_D$ denote a preference profile such that $\op{y}$ is complete at it; by definition, $\cop (\ordsucc_D) = C_H (\set (\op{y}))$.
Also let $Z := \gs - \left[ C_H (\set(\op{x}))\cup C_H (\set(\op{y})) \right]$ be the set of contracts that are not chosen either from $\set(\op{x})$ or from $\set(\op{y})$.
Notice that $\drp{\dx{x_k}}{Z}$ has only one acceptable contract, $x_k$, because $\dx{x_k}$ signs no non-null contract at $\CHop{x}$ as seen above.
In what follows, we derive a contradiction for two cases separately. 
More specifically, we divide the case into two depending on the existence of an offer process $\op{w} = (w_1, \dots, w_T)$ such that 
	\begin{itemize}
	\item 	it is complete at $\drp{D}{Z}$, and 
	\item 	$\{w_1, \dots, w_t \} = \CHop{x}$ and $w_{t+1} = x_k$, where $t := \left| \CHop{x} \right|$. 
	\end{itemize}

\header{Case 1: the existence of $\op{w}$}
If such a process exists, $w_{t+1} = x_k$ must be an element of $\RH{w}{t+1} - \RHop{w}$ for the following reasons:
On the one hand, $w_{t+1} = x_k$ cannot be chosen from $\left\{w_1, \dots, w_{t+1}\right\}$.
To see this, note that the IRC implies $C_H (M') = C_H (M)$ for any menu $M$ and $M'$ such that $C_H (M) \subseteq M' \subseteq M$.
Applying this with $M = \set (\op{x})$ and $M' = \{w_1, \dots, w_{t+1} \}$, 
	we obtain $C_H (\{w_1, \dots, w_{t+1}\}) = C_H ( \set (\op{x})) = \{w_1, \dots, w_t\}$, where the second equality follows from the definition of $\op{w}$. 
On the other hand, $w_{t+1} $ must be chosen from $\set(\op{w})$.
To see this, note that each agent should weakly prefer $ \cop \left( \drp{D}{Z} \right) = \CHop{w}$ to $\cop \left( \ordsucc_D \right) = \CHop{y}$, by the assumption of \mon.
Since $w_{t+1} = x_k $ belongs to $\CHop{y}$ and it is the only acceptable contract for $\drp{\dx{w_{t+1}}}{Z}$, it should follow that $w_{t+1} \in \CHop{w}$.

However, $w_{t+1} \in \RH{w}{t+1} - \RHop{w}$ contradicts the assumption of OSaA: 
	At the step when $w_{t+1}$ is ``chosen back'' along the process $\op{w}$, OSaA requires that agent $\dx{w_{t+1}}$ should hold another non-null contract.
	This is impossible because $\drp{\dx{w_{t+1}}}{Z}$ has no other acceptable contract.

\header{Case 2: the nonexistence of $\op{w}$}

Now suppose that no offer process $\op{w}$ meets the conditions specified above.
Note that we can easily construct $\op{w}$ if $x_\ell \drpeq{\dx{x_\ell}}{Z} C_H (\set(\op{y}))$ for any $\ell \neq k$ such that $x_\ell \in \CHop{x}$.
The non-existence thus implies that 
		\begin{align} \label{eq:asm:ranking}
		\Bigl[ C_H (\set(\op{y})) \succ_{\dx{x_\ell}} x_\ell \mbox{ and } x_\ell \in \CHop{x} \Bigr]\mbox{ for some }\ell \leq m.
		\end{align}
In what follows, we fix any such $\ell$ and let $y^*$ denote the non-null contract agent $\dx{x_\ell}$ signs at $C_H (\set(\op{y}))$.
The rest of this proof is in two steps:
To begin, we derive a contradiction assuming $\op{x}$ is a subprocess of $\op{y}$. 
This means that a contradiction occurs if $C_H$ violates OS (rather than strong OS).
Thus, we then consider the general case assuming $C_H$ is OS.

To begin, suppose for the moment that $\op{x}$ is a subprocess of $\op{y}$.
Along the process $\op{y}$, then, $y^*$ must be offered before $x_\ell$ because $\op{y}$ is complete at $\ordsucc_D$ by definition and $\dx{x_\ell}$ prefers $y^*$ to $x_\ell$ by \cref{eq:asm:ranking}.
Under the subprocess assumption, this means that $y^* = x_{\ell'}$ for some $\ell' < \ell \leq m$.
Moreover, $y^* = x_{\ell'}$ should be rejected from $\{x_1, \dots, x_{\ell-1}\}$ for $\dx{x_\ell}$ to offer $x_\ell$ at step $\ell$, 
That is to say, $y^* = x_{\ell'}$ should be rejected from $\{x_1, \dots, x_\ell\}$ while it is not from $\set (\op{y})$ by the definition of $y^*$. 
This, however, contradicts \cref{eq:shortest}.

Let us now return to the general case where $\op{x}$ may not be a subprocess of $\op{y}$.
It is by now without loss of generality to assume that $C_H$ is OS. 
Then, $y^* \succ_{\dx{x_\ell}} x_{\ell}$ and $y^* \in \CHop{y}$ jointly imply that $x_\ell$ is \emph{not} offered along the process $\op{y}$ because under OS, $y^* \in \CHop{y}$ requires that $y^*$ be never rejected along $\op{y}$. 
However, since $x_\ell \in \set (\op{x})$ by definition, this contradicts the assumption of $\set (\op{x}) \subseteq \set (\op{y})$. 
\qed


\subsubsection{Proof of (iv) $\bm\Rightarrow$ (i)}

The proof is two-fold: We first show that strong OS implies weak Maskin monotonicity and then strengthen it to IR monotonicity.

\header{Part 1: weak Maskin Monotonicity}
Suppose that $C_H$ is strongly OS and arbitrarily fix $\ordsucc_D, \ordvarsucc_D \in \pdom$
	such that $\succ_{d}$ is a monotonic transformation of $\varsucc_{d}$ at $\xd{d}{\cop (\ordvarsucc_D)}$ for each $d \in D$. 
In what follows, we establish that $\cop (\ordsucc_D) \succeq_{d} \cop (\ordvarsucc_D)$ for all $d \in D$.

To begin with, let us consider a special case where  $\Ac (\ordsucc_{d}) \subseteq \Ac (\ordvarsucc_{d})$ for all $d \in D$.
Towards a contradiction, suppose further that $\cop (\ordvarsucc_D) \succ_{d^\circ} \cop (\ordsucc_D)$ for some $d^\circ \in D$.
Since $\cop (\succ_D) \succeq_{d^\circ} \varnothing$ by the individual rationality of the COM, 
	this implies $\cop (\varsucc_D) \succ_{d^\circ	} \varnothing$, and hence, 
	$d^\circ$ should sign some non-null contract $y^\circ$ at $\cop (\varsucc_D)$.
Then, it should be also offered but rejected along the COP with $\ordsucc_D$; 
		that is, $\Delta_R := R_H (\set(\op{x})) - R_H (\set(\op{y}))$ contains $ y^\circ$ and is thus non-empty, where $\op{x}$ and $\op{y}$ are complete offer processes at $(\ordsucc_D, C_H)$ and $(\ordvarsucc_D, C_H)$, respectively.
Applying \Cref{lem:sOS}, there should exist $x^* \in \set( \op{x})$ such that 
		(i) $x^* \not\in  \Ac \left( \ordvarsucc_{\dx{x^*}} \right)$ or 
		(ii) $x^* \succ_{\dx{x^*}} y^*$ and $y^* \varsucc_{\dx{x^*}} x^*$, 
		where $y^* = \xd{\dx{x^*}}{\cop (\ordvarsucc_D)}$.
The first case is impossible under the current assumption of $\Ac (\ordsucc_{\dx{x^*}}) \subseteq \Ac (\ordvarsucc_{\dx{x^*}})$.
The second case is also impossible, by the assumption that 
	$\succ_{\dx{x^*}}$ is a monotonic transformation of $\varsucc_{\dx{x^*}}$ at $ \xd{\dx{x^*}}{\cop (\ordvarsucc_D)}=y^* $.
To avoid a contradiction, 
	therefore, we must have $\cop (\ordsucc_D) \succeq_d \cop (\ordvarsucc_D)$ for all $d \in D$, 
	as long as  $\Ac (\ordsucc_{d}) \subseteq \Ac (\ordvarsucc_{d})$ for all $d \in D$.

To complete this part of the proof, now consider the general case where $\Ac (\ordsucc_{d}) \subseteq \Ac (\ordvarsucc_{d})$ may fail to hold.
Let $Z := \{ z \in \gs:  \cop (\ordvarsucc_D) \succ_{d} z \succ_{d} \varnothing \mbox{ for some } d \in D \}$.
For each $d \in D$, then, $\orddrp{d}{Z}$ remains a monotonic transformation of $\ordvarsucc_{d}$ at $\xd{d}{\cop (\ordvarsucc_D)}$, 
	while we regain $\Ac \left(\orddrp{d}{Z}\right) \subseteq \Ac (\ordvarsucc_{d})$.
Therefore, the conclusion of the previous paragraph entails that $\cop \left(\orddrp{D}{Z}\right) \succeq_{d} \cop (\ordvarsucc_D)$ for all $d \in D$.
This further implies that a complete offer process at $\left(\orddrp{D}{Z}, C_H \right)$ is also complete at $(\ordsucc_D, C_H)$.
Thus, we should have $\cop (\ordsucc_D) = \cop \left(\orddrp{D}{Z}\right) \succeq_{d}\cop (\ordvarsucc_D)$ for all $d \in D$, as desired.

\header{Part 2: IR Monotonicity}

Now we proceed to the proof of IR monotonicity.
Continue assuming $C_H$ is strongly OS.
Let $\ordsucc_D$ and $\ordvarsucc_D$ be an arbitrary pair of preference profiles such that 
	each $\ordsucc_d$ is an IR monotonic transformation of $\ordvarsucc_d$ at $\xd{d}{\cop(\ordvarsucc_D)}$.
What we need to show is $\cop \left( \ordsucc_D \right) \succeq_d \cop \left(\ordvarsucc_D\right)$ for all $d \in D$.

To begin, for each $d \in D$, define an ``intermediate'' preference $\ordsucc^0_d$ so that 
	(i) $\ordsucc^0_d$ is monotonic transformation of $\ordvarsucc_d$ at $\xd{d}{\cop(\ordvarsucc_D)}$, and 
	(ii) either $\ordsucc_d = \ordsucc^0_d$ or $\ordsucc_d = \left( \ordsucc^0_d \right)^{-x_d}$, where
			$x_d =\xd{d}{\cop(\ordvarsucc_D)} \neq \varnothing$.
More specifically, we can obtain such $\ordsucc^0_d$ for each $d \in D$ as follows:
\begin{itemize}
\item 	If $\ordsucc_d$ is a monotonic transformation of $\ordvarsucc_d$ at $\xd{d}{\cop(\ordvarsucc_D)}$,
				then $\ordsucc^0_d := \ordsucc_d$.
\item 	Otherwise, define $\ordsucc^0_d$ by 
				$\Ac (\ordsucc^0_d) := \left\{ w \in \gs: w \succeq_d x_d \right\}$ and 
				$w \succ^0_d w' \stackrel{\mathrm{def}}{\Longleftrightarrow} w \succ_d w'$ for all $w,w' \in \gs_d$.
\end{itemize} 
Note that the second case (where $\ordsucc^0_d \neq \ordsucc_d$) arises only if $d$ signs some non-null contract $x_d$ at $\cop (\ordvarsucc_D)$,
since an IR monotonic transformation at $\varnothing$ is always a monotonic transformation at $\varnothing$ by definitions.
When $\ordsucc^0_d \neq \ordsucc_d$, further, 
	$x_d$ is the least preferred acceptable contract for $\ordsucc^0_d$, and hence, $\ordsucc_d = \left( \ordsucc^0_d \right)^{-x_d}$;
	this is because if $x_d \in \Ac (\ordsucc_d)$, then the IR monotonic transformation $\ordsucc_d$ should be a monotonic transformation.
Since $\ordsucc^0_D$ is a monotonic transformation of $\ordvarsucc_D$ at $\cop (\ordvarsucc_D)$, 
it follows from weak Maskin monotonicity we have already established above,  
\begin{align} \label{eq:thm:IRMT:1}
\cop \left( \ordsucc^0_D\right) \succeq^0_d \cop (\ordvarsucc_D) \mbox{ for all } d \in D.
\end{align}

Now, arbitrarily label the agents as $\{d_1, \dots, d_{T}\} =D$, where $T:= |D|$, 
	and construct a sequence of preference profiles,
	$\ordsucc_D^{1}\dots, \ordsucc_D^{T}$, such that for each $i,t  \in \{1,\dots, T\}$,
	we have $\ordsucc_{d_i}^t := \ordsucc_{d_i}$  if $t \geq i$ and $\ordsucc_{d_i}^t := \ordsucc^0_{d_i}$ otherwise.
For each $t \in \{1, \dots, T\}$, we can then show that 
	\begin{align} \label{eq:thm:IRMT:2}
		\cop \left( \ordsucc^t_D\right) \succeq^t_d \cop (\ordsucc^{t-1}_D) \mbox{ for all } d \in D,
	\end{align} as follows.
Note that for each $d \in D$, by construction, either $\ordsucc^t_d = \ordsucc^0_d$ or $\ordsucc^t_d = \ordsucc_d = \left( \ordsucc^0_d \right)^{-x_d}$, where $x_d$ is the non-null contract $d$ signs at $\cop(\ordvarsucc_D)$. 
As a consequence, either $ \ordsucc^t_D = \ordsucc^{t-1}_D$ or $ \ordsucc^t_D = \left( \ordsucc^{t-1}_D \right)^{-x_{d_t}}$ holds by construction for each $t \in \{1,\dots, T\}$. 
If $ \ordsucc^t_D = \ordsucc^{t-1}_D$ and hence $\cop \left( \ordsucc^t_D \right) = \cop \left( \ordsucc^{t-1}_D \right)$, then \cref{eq:thm:IRMT:2} is trivial.
Otherwise, for any $d \neq d_t$, 
	$\cop \left( \ordsucc^t_D \right) \succeq_{d}^t \cop \left( \ordsucc^{t-1}_D \right)$ 
	follows from \Cref{lem:weak:COP} \ref{sublem:weak:COP:least}, since $ \ordsucc^t_D = \left( \ordsucc^{t-1}_D \right)^{-x_{d_t}}$ and $x_{d_t}$ is the least acceptable contract for $\ordsucc^{t-1}_{d_t}$.
Moreover, $\cop \left( \ordsucc^t_D \right) \succeq_{d_t}^t \cop \left( \ordsucc^{t-1}_D \right)$ also holds for the following reasons:
	\begin{itemize}
	\item Agent $d_t$ should sign some non-null contract at $\cop \left( \ordsucc^{t-1}_D \right)$,
						since
					\begin{align*}
					\cop \left( \ordsucc^{t-1}_D \right) \succeq^0_{d_t} \cop \left( \ordsucc^{t-2}_D \right) \succeq^0_{d_t} \cdots \succeq^0_{d_t} \cop \left( \ordsucc^{0}_D \right) \succeq^0_{d_t} \cop(\ordvarsucc_D) \ni x_{d_t},
					\end{align*}
					where for each $\tau < t$,  the ranking between $\tau$ and $\tau-1$ holds either by $ \ordsucc^\tau_D = \ordsucc^{\tau-1}_D$ or by \Cref{lem:weak:COP} \ref{sublem:weak:COP:least}, as we have just argued above.
	\item If $d_t$ signs $x_{d_t}$ at $ \cop \left( \ordsucc^{t-1}_D \right)$,
							then she should be assigned the null contract at $\cop \left( \ordsucc^{t}_D \right)$.
					This is because $ \cop \left( \ordsucc^{t-1}_D \right) \succ_{d_t}^{t-1}  \cop \left( \ordsucc^{t}_D \right)$ 
							by \Cref{lem:weak:COP} \ref{sublem:weak:COP:chosen} with $\ordsucc^t_D = \left( \ordsucc^{t-1}_D \right)^{-x_{d_t}}$, 
							and	$x_{d_t} \in \cop \left( \ordsucc^{t-1}_D \right)$ is the least-preferred acceptable contract for $\ordsucc^{t-1}_{d_t}$.
					Nevertheless, this implies $\cop \left( \ordsucc^{t}_D \right) \succ^t_{d_t} \cop \left( \ordsucc^{t-1}_D \right) $,
							since $x_{d_t}$ is unacceptable for $\ordsucc^t_{d_t} = \left( \ordsucc^{t-1}_{d_t}\right)^{-x_{d_t}}$.
	\item If she signs another contract $y_{d_t}$ at $ \cop \left( \ordsucc^{t-1}_D \right)$,
							then $y_{d_t} \succ^{t-1}_{d_t} x_{d_t}$ because $x_{d_t}$ is the least preferred acceptable.
					Hence, \Cref{lem:weak:COP} \ref{sublem:weak:COP:unpref} implies 
							$\cop \left( \ordsucc^t_D \right) = \cop \left( \ordsucc^{t-1}_D \right)$.								
	\end{itemize} 
Therefore, we should have \cref{eq:thm:IRMT:2} for all $d \in D$ and all $t \in \{1, \dots, T\}$.
		
		\medskip
		
Now we are ready to establish $\cop \left( \ordsucc_D \right) \succeq_d \cop \left(\ordvarsucc_D\right)$ for all $d \in D$.
Combining  \cref{eq:thm:IRMT:1,eq:thm:IRMT:2} across $t$'s, we obtain
\begin{align*}
	\cop (\ordsucc_D) \equiv \cop \left( \ordsucc_D^{T} \right) \succeq_d^T
		 \cop \left( \ordsucc_D^{T-1} \right) \succeq_{d}^{T-1} \cdots \succeq_{d}^1 
		 \cop \left( \ordsucc_D^{0} \right) \succeq_{d}^0 \cop ( \ordvarsucc_D).
\end{align*}
By the definitions of $\ordsucc_D^1, \dots, \ordsucc_D^T$, this particularly implies that for each $\tau \in \{1,\dots, T\}$,
\begin{align} \label{eq:thm:IRMT:3}
	\cop (\ordsucc_D) \succeq_{d_\tau}  \cop \left( \ordsucc_D^{\tau-1} \right) \succeq_{d_\tau}^0  \cop ( \ordvarsucc_D).
\end{align}
For $\cop (\ordsucc_D) \succeq_{d_\tau} \cop  ( \ordvarsucc_D)$ to fail to hold, thus, 
	$\ordsucc_{d_\tau}$ and $\ordsucc^0_{d_\tau}$ must disagree 
	on the ranking between $\cop \left( \ordsucc_D^{\tau-1} \right)$ and $\cop ( \ordvarsucc_D)$.
However, recall we defined $\ordsucc_{d_\tau}^0$ so that $\ordsucc_{d_\tau} \neq \ordsucc_{d_\tau}^0$ arises only if  
	there exists $x_{d_\tau} \in \cop ( \ordvarsucc_D)$ such that $\succeq_{d_\tau} = \left( \succeq_{d_\tau}^0 \right)^{-x_{d_\tau}}$.
That is to say, $\ordsucc_{d_\tau} \neq \ordsucc_{d_\tau}^0$ and $\cop \left( \ordsucc_D^{\tau-1} \right) \succeq_{d_\tau}^0  \cop ( \ordvarsucc_D)$ 
	jointly imply $x_{d_\tau} \in \cop (\ordvarsucc_D) - \Ac (\ordsucc_{d_\tau})$; if so, $\cop (\ordsucc_D) \succeq_{d_\tau}  \cop (\ordvarsucc_D)$ follows from the individual rationality of the COM. 
We can thus conclude from (\ref{eq:thm:IRMT:3}) that 
	$\cop (\ordsucc_D) \succeq_{d_\tau}  \cop (\ordvarsucc_D)$ for each $d_\tau \in D$, and the proof is complete.
\qed

 \subsection{Proof of \Cref{thm:uniqueness}} \label{subsec:proof:uniquness}

Since the ``if'' part is immediate by \Cref{thm:mon}, we only establish the ``only if'' part.
Let $f$ be a stable mechanism. 
We will show that it is not weakly Maskin monotonic if it is not the COM.
To begin, assume there exists $\ordsucc_D \in \pdom$ such that $f(\succ_D) \neq \cop(\succ_D)$.
Further assume that $\succ_D$ is ``minimal'' in the following sense: For any $\varsucc_D \in \pdom$,
	\begin{align} \label{eq:uni:1}
		\left[f(\varsucc_D) \neq \cop(\varsucc_D) \right]
		\Longrightarrow 
		\left[ \sum_{d\in D} \left| \Ac (\varsucc_d)\right| \geq \sum_{d\in D} \left| \Ac (\succ_d)\right| \right].
	\end{align}
This is without loss of generality because $\pdom$ is finite. 
Since both $f(\succ_D)$ and $\cop(\succ_D)$ are stable, by \citet[Lemma 1]{refall:1324}, 
	there is $d^* \in D$ who signs two distinct non-null contracts at $f(\succ_D)$ and $\cop(\succ_D)$.
To simplify notation, define $x:= \xd{d^*}{f(\succ_D)}$ and $x^\star := \xd{d^*}{\cop(\succ_D)}$.
The minimality assumption (i.e., \cref{eq:uni:1}) requires 
	\begin{align}\label{eq:uni:2} 
		f\left( \drp{D}{x} \right) = \cop \left( \drp{D}{x} \right)
		\mbox{ and }
		f\left( \drp{D}{x^\star} \right) = \cop \left( \drp{D}{x^\star} \right),
	\end{align}
	since $x, x^\star \in \Ac(\succ_{d^*})$ by the stability of $f$ and $\cop$.\footnote{
		Strictly speaking, $f\left( \drp{D}{x} \right)$ and $f\left( \drp{D}{x^\star} \right)$ are not well-defined because 
			$\drp{D}{x}$ and $\drp{D}{x^\star}$ are classes of preference profiles, 
			and $f$ may assign different outcomes for different profiles in each class. 
		However, the minimality assumption implies that \cref{eq:uni:2} holds for every element in $\drp{D}{x}$ and $\drp{D}{x^\star}$.
		Thus, $f\left( \drp{D}{x} \right)$ and $f\left( \drp{D}{x^\star} \right)$ are uniquely pinned down under the minimality assumption.
		}
In what follows, we divide the case into two and identify a violation of weak Maskin monotonicity for each case.

First, suppose $x \succ_{d^*} x^\star$. 
Then, we must have $ x^\star \succ_{d^*} \cop \left( \drp{D}{x^\star}\right)$ for the following reasons:
Let $(w_1, \dots, w_T)$ be a complete offer process that is observed during the COP algorithm at $\ordsucc_D$.
Since $x^\star$ is chosen as the final outcome, $w_{t^\star+1} = x^\star$ for some $t^\star$.\footnote{%
	Note $w_1 \neq x^\star$ because $d^*$ should offer $x$ before $x^\star$.
	}
Note also that any contract that $\succeq_{d^*}$ prefers to $x^\star$ should be rejected at step $t^\star$; i.e.,
	\begin{align} \label{eq:uni:3}
		\left[ y \succ_{d^*} x^\star \right] \Rightarrow 
		\left[ y = w_t \mbox{ for some } t \leq t^\star \mbox{ and } y \in R_H \left(\left\{w_1,\dots, w_{t^\star}\right\}\right)\right]. 
	\end{align}
Now consider the COP with $\orddrp{D}{x^\star}$.
We can run the algorithm so that $(w_1, \dots, w_{t^\star})$ are the first $t^\star$ offers. 
By the strong OS assumption, any contract $y \in R_H\left(\left\{w_1, \dots, w_{t^\star}\right\}\right)$ cannot be chosen at the final outcome.\footnote{%
	\label{ftn:asm:1}
	Note that this is valid even if we only assume OS.  
	}
Combining this observation with \cref{eq:uni:3}, we obtain $ x^\star \succ_{d^*} \cop \left( \drp{D}{x^\star}\right)$.
By \cref{eq:uni:2} and the supposition of $x \succ_{d^*} x^\star$, it then follows that $x \drp{d^*}{x^\star}  f\left( \drp{D}{x^\star}\right)$.
This means $f$ is not weakly Maskin monotonic because $\orddrp{d^*}{x^\star}$ is a monotonic transformation of $\ordsucc_{d^*}$ at $x = \xd{d^*}{f(\ordsucc_D)}$.

Second, suppose  $x^\star \succ_{d^*} x$. 
Under the strong OS assumption, then, $x$ is never offered during the COP with $\ordsucc_D$.\footnote{%
	\label{ftn:asm:2}
	Again, this is valid even if we only assume OS.  
	}
We thus have $\cop \left( \drp{D}{x}\right) = \cop \left(\ordsucc_D\right)$.
Then, by \cref{eq:uni:2}, $d^*$ should also be assigned $x^\star$ at $f \left( \drp{D}{x}\right)$.
By the supposition of $x^\star \succ_{d^*} x$, we obtain
	\begin{align*}
	 \xd{d^*}{f\left(\drp{D}{x}\right)} = x^\star \succ_{d^*} x = \xd{d^*}{f\left(\succ_D\right)}.
	\end{align*}
This means $f$ is not weakly Maskin monotonic because $\ordsucc_{d^*}$ is a monotonic transformation of $\orddrp{d^*}{x}$ at $x^\star = \xd{d^*}{f(\orddrp{D}{x})}$.
\qed

\subsection{Proof of \Cref{thm:GSP}} \label{subsec:proof:GSP}

As the ``only if'' part is immediate by definition, we only establish the ``if'' part.
Suppose towards a contradiction that $\cop (\cdot)$ is strategy-proof and that 
	there are $ \ordsucc^\circ_D, \ordvarsucc_D \in \pdom$ and $E \subseteq D$ such that 
	$\cop \left( \ordvarsucc_D \right) \succ^\circ_d  \cop\left( \ordsucc^\circ_D \right) $ for all $d \in E$ and 
	$\ordvarsucc_{d'} = \ordsucc^\circ_{d'}$ for all $d' \in D - E$.
Also assume $ \Ac (\ordsucc^\circ_d) \subseteq \Ac (\ordvarsucc_d) $ for all $d \in E$.
This is without loss of generality for the following reasons:
Suppose $w \in \Ac (\ordsucc^\circ_d) - \Ac (\ordvarsucc_d)$ for some $d \in E$.
	Let $\ordvarsucc'_d$  be the preference obtained by moving up $w$ to the ``bottom'' of acceptable contracts; that is,
		\begin{itemize}
		\item  	$z \varsucc'_d z' \Leftrightarrow z \varsucc_d z'$ for all $z, z' \in \gs_d - \{w\}$, 
		\item  	$z \varsucc'_d w $ for all $z \in \Ac (\ordvarsucc_d)$, and $w \varsucc'_d \varnothing$.
		\end{itemize}
Note that $d$ should sign a non-null contract at $\cop (\ordvarsucc_D)$ by the assumption of $\cop (\ordvarsucc_D) \succ^\circ_d \cop \left(\ordsucc^\circ_D \right)$.
During the COP, thus, $w$ is never offered no matter if it is acceptable or not; i.e., $\cop \left( \ordvarsucc'_d, \ordvarsucc_{-d}\right) = \cop (\ordvarsucc_D)$.
Repeating the same argument, we can construct $\ordvarsucc'_d, \ordvarsucc''_d, \dots, \ordvarsucc^{(n)}_d$
			so that $\cop (\ordvarsucc_D) = \cop \left(\ordvarsucc'_d, \ordvarsucc_{-d}\right) = \dots = \cop \left(\ordvarsucc^{(n)}_d, \ordvarsucc_{-d}\right)$ and $\Ac \left( \ordsucc^\circ_d\right) \subseteq \Ac \left(\ordvarsucc^{(n)}_D\right) $.
By redefining $\ordvarsucc_D$ to be $\ordvarsucc^{(n)}$, we can always guarantee $ \Ac (\ordsucc^\circ_d) \subseteq \Ac (\ordvarsucc_d) $ without changing the outcome of the COM.

Given $ \Ac (\ordsucc^\circ_d) \subseteq \Ac (\ordvarsucc_d) $ for all $d \in E$, construct another preference profile $\ordsucc_D$ from $ \ordsucc^\circ_D$ as follows:
	For each $d \in E$ and for each $w$ such that $\dx{w} = d$ and $w \succ^\circ_d \cop(\ordvarsucc_D) \varsucc_d w$,
	lower the ``position'' of $w$ to anywhere between the (possibly null) contracts that $d$ signs at $\cop(\ordvarsucc_D)$ and at $\cop(\ordsucc^\circ_D)$.
More formally, $\ordsucc_D$ is a preference profile such that for all $d \in E$, 
	\begin{itemize}
	\item $\Ac (\ordsucc_d) = \Ac \left(\ordsucc_d^\circ\right)$ for all $d \in E$, 
	\item $\left\{ z: z \succeq_d \cop \left( \ordsucc^\circ_D \right) \right\} = \left\{ z : z \succeq^\circ_d \cop \left(\ordsucc^\circ_D \right) \right\}$ for all $d \in E$,
	\item $\left\{ z: z \succ_d \cop \left(\ordvarsucc_D\right) \right\} \subseteq \{ z : z \varsucc_d \cop(\ordvarsucc_D) \} $ for all $d \in E$, and 
	\item $\ordsucc_{d'} = \ordsucc^\circ_{d'} = \ordvarsucc_{d'}$ for all $d' \in D -E $.
	\end{itemize} 
By construction,  for any $d \in D$, the ranking between $\cop (\ordvarsucc_D)$ and $\cop(\ordsucc_D)$ remains unchanged either with $\ordsucc^\circ_d$ or with $\ordsucc_d$.
Moreover, we also have 	$\cop (\ordsucc_D) = \cop (\ordsucc^\circ_D)$ by repeatedly applying \Cref{lem:non-bossy:OS+SP}.\footnote{
	More precisely, we can establish this equality as follows:
	Arbitrarily order the members of $E$ as $d_1, \dots, d_n$, and for each $k \in \{1,\dots, n\}$,   
		let $\succ_D^{k}$ to be $ \ordsucc^{k}_d = \ordsucc_d$ for all $d \in \{d_1 ,\dots, d_k\}$ and $\ordsucc^{k}_{d'} = \ordsucc^\circ_{d'}$ for all the others.
	Then, \Cref{lem:non-bossy:OS+SP} implies $ \cop \left( \ordsucc^\circ_D \right) =  \cop \left( \ordsucc^1_D \right) = \dots = \cop \left( \ordsucc^n_D \right) \equiv \cop (\ordsucc_D)$.
	}
These observations together imply $\cop (\ordvarsucc_D) \succ_d \cop(\ordsucc_D)$ for each $d \in E$.

We are now ready to derive a contradiction.
For any $d \in E$, it follows from $\cop (\ordvarsucc_D) \succ_d \cop(\ordsucc_D)$
that she should sign a non-null contract at $\cop (\ordvarsucc_D)$ and this contract should be offered but rejected along the COP with $\ordsucc_D$.
That is, $R_H (\set(\op{x})) - R_H (\set(\op{y}))$ is non-empty, where $\op{x}$ and $\op{y}$ are the complete offer processes at $(\ordsucc_D, C_H)$ and $(\ordvarsucc_D, C_H)$, respectively.
Applying \Cref{lem:sOS},  there should exist $x^* \in \set (\op{x})$ such that either 
	[1] $x^* \not\in  \Ac \left( \ordvarsucc_{\dx{x^*}} \right)$ or 
	[2] $y^* \not\succ_{\dx{x^*}} x^*$ and $y^* \varsucc_{\dx{x^*}} x^*$, 
	where $y^*$ is the (non-null) contract $\dx{x^*}$ signs at $\cop (\ordvarsucc_D)$.
If $\dx{x^*} \not\in E$, either case clearly contradicts the construction that $\ordsucc_{d'} = \ordvarsucc_{d'}$ for all $d' \not\in E$.
Even if $\dx{x^*} \in E$, the first case is impossible because for all $d \in E$,  $\Ac (\ordsucc_d) = \Ac \left(\ordsucc_d^\circ\right)$ by the construction of $\ordsucc_d$ and  $\Ac \left(\ordsucc_d^\circ\right) \subseteq \Ac (\ordvarsucc_d)$ by assumption.
The second case also contradicts the construction of $\ordsucc_D$ since $\left\{ z: z \succ_d \cop \left(\ordvarsucc_D\right) \right\} \subseteq \{ z : z \varsucc_d \cop(\ordvarsucc_D) \} $ ensures $\cop (\ordvarsucc_D) \varsucc_d x^* \Rightarrow \cop (\ordvarsucc_D) \succ_d x^*$ for any $d \in E$.
\qed

\subsection{Proof of \Cref{cor:suff}} \label{subsec:proof:suff}
Under the assumptions, the COM is known to be strategy-proof \citep[Theorem 7]{refall:1401}.
The claim is thus an immediate corollary of \Cref{thm:GSP}.
\qed

\subsection{Proof of \Cref{thm:anti-nec}} \label{subsec:proof:anti-nec}

Define a single-institution market $\mathcal{M}^0 = \left( D^0, \{h\}, X^0, C_h^{0} \right)$ as follows: 
Let $D^0 := \{d_1, d_2, d_3\}$, $X^0 := \{x_1, x_2, y_2, x_3, y_3\}$, 
	where $\dx{x_1} = d_1$, $\dx{x_2}= \dx{y_2} = d_2$, and $\dx{x_3}= \dx{y_3} = d_3$. 
Note that $x_1$ is the only contract in $X^0$ that involves agent $d_1$.
Also let $C_h^{0}$ be a choice function induced by a preference relation $\succ_h$ over the subsets of $X^0$ such that 
	\begin{align*}
		\{x_1, y_2, y_3\} 	&\succ_h \{x_1, y_3\} \succ_h \{ y_2, x_3\} \succ_h \{x_1,x_2\} \\
								&\succ_h [\mbox{all the other feasible doubletons}] \succ_h [\mbox{all the singletons}] \succ_h \emptyset, 
	\end{align*}
	where any tripleton except $\{x_1,y_2, y_3\}$ is unacceptable.
As we have seen in \Cref{ex:not-sOS}, $C^0_h$ is OS but not strongly OS. 
Now let $\mathcal{M} = (D, H, \gs, C_H)$ be an arbitrary regular extension of $\mathcal{M}^0$.
In what follows, our goal is to establish weak group strategy-proofness for the COM $\cop$ in the regular extension $\mathcal{M}$. 

To do so, we define three markets that differ from $\mathcal{M}$ only in institution $h$'s choice function.
In the first market, $\mathcal{M}^{-x_2}$, institution $h$'s choice function is given by $C^{-x_2}_h$, where
	\begin{align*}
		C^{-x_2}_h (\menu) = C^0_h (\menu - \{x_2\}) \mbox{ for each } \menu \InMenus.
	\end{align*} 
To define the second and third, consider two preference relations, $\vvsucc_h$ and $\vvvsucc_h$, over $X^0$ such that 
	\begin{align*}
		\{x_1, y_2, y_3\} 	&\vvsucc_h \{x_1, y_3\} \vvsucc_h {\color[named]{NavyBlue} \pmb{\{x_1, y_2 \}}} \vvsucc_h \{y_2, x_3\} \vvsucc_h \{x_1,x_2\} \\
								&\vvsucc_h [\mbox{all the other feasible doubletons}] \vvsucc_h [\mbox{all the singletons}] \vvsucc_h \emptyset, 
	\end{align*}
	where (i) any tripleton except $\{x_1,y_2, y_3\}$ is unacceptable and (ii) the rankings among the ``other'' doubletons and singletons are the same as under $\succ_h$,
	and 
	\begin{align*}
		 {\color[named]{NavyBlue} \pmb{\{y_2, y_3\}}} 	&\vvvsucc_h \{x_1, y_3\} \vvvsucc_h \{ y_2, x_3\} \vvvsucc_h \{x_1,x_2\} \\
								&\vvvsucc_h [\mbox{all the other feasible doubletons}] \vvvsucc_h [\mbox{all the singletons}] \vvvsucc_h \emptyset, 
	\end{align*}
	where (i) any tripleton is unacceptable and (ii) the rankings among the ``other'' doubletons and singletons are the same as under $\succ_h$.
Let $C^{\vvsucc}_h$ and $C^{\vvvsucc}_h$ be choice functions induced by $\vvsucc_h$ and $\vvvsucc_h$, and 
	define $\mathcal{M}^{\vvsucc}$ and $\mathcal{M}^{\vvvsucc}$ to be the markets where $h$'s choice function is $C^{\vvsucc}_h$ and $C^{\vvvsucc}_h$ (keeping all the other institutions' choice functions constant). 
In the same way as we did in \Cref{ex:not-sOS}, 
	one can easily confirm with \Cref{table:anti-nec} below that $C^{-x_2}_h$, $C^{\vvsucc}_h$, and $C^{\vvvsucc}_h$ are all strongly OS as well as strongly OSM. 
As a consequence, by \Cref{cor:suff}, \emph{the COM is weakly group strategy-proof in any of $\mathcal{M}^{-x_2}$, $\mathcal{M}^{\vvsucc}$, and $\mathcal{M}^{\vvvsucc}$.}
For the rest of this proof, we use $\cop_{-x_2}$, $\cop_{\vvsucc}$, and $\cop_{\vvvsucc}$, respectively, to denote the COM in those three markets. 

\begin{table}[p!]
	\renewcommand{\arraystretch}{1.5}
	\begin{minipage}[c]{\textwidth}
			\begin{center}
					\begin{tabular}{ccccccc} \toprule[1.5pt]
					$\set \left( \op{w}^3 \right)$ 	&$R_h \left(\set \left( \op{w}^3 \right)\right)$  
					&$w_4$ &$R_h \left(\set \left( \op{w}^4 \right)\right)$ &$w_5$ 	&$R_h \left(\set \left( \op{w}^5 \right)\right)$  \\ \midrule
					$\{x_1, x_2, x_3\}$ 			&{\color[named]{NavyBlue}$\pmb{\{x_2\}}$} 									&$y_2$ 		&$\{x_1, x_2\}$		&	&\\
					$\{x_1, x_2, y_3\}$ 			&$\{x_2\}$  							&$y_2$		&$\{x_2\}$ \\
					$\{x_{1}, y_{2}, x_{3}\}$ 	&$\{x_1\}$ 	& \\ 
					$\{x_{1}, y_{2}, y_{3}\}$ 	&$\emptyset$ \\ 
					\bottomrule[1.5pt]
					\end{tabular} 
			\end{center}
		\subcaption{Observable offer processes for $C^{-x_2}_h$ }
		\vspace{1em}
			\begin{center}
					\begin{tabular}{ccccccc} \toprule[1.5pt]
					$\set \left( \op{w}^3 \right)$ 	&$R_h \left(\set \left( \op{w}^3 \right)\right)$  
					&$w_4$ &$R_h \left(\set \left( \op{w}^4 \right)\right)$ &$w_5$ 	&$R_h \left(\set \left( \op{w}^5 \right)\right)$  \\ \midrule
					$\{x_1, x_2, x_3\}$ 			&$\{x_3\}$ 												&$y_3$ 		&$\{x_2, x_3\}$		&$y_2$ 	&{${\{x_2, x_3\}}$}\\
					$\{x_1, x_2, y_3\}$ 			&$\{x_2\}$ 												&$y_2$		&$\{x_2\}$ \\
					$\{x_{1}, y_{2}, x_{3}\}$ 	&{\color[named]{NavyBlue}$\pmb{\{x_3\}}$} 	&$y_3$		&$\{x_3\}$	\\ 
					$\{x_{1}, y_{2}, y_{3}\}$ 	&$\emptyset$ \\ 
					\bottomrule[1.5pt]
					\end{tabular} 
			\end{center}
		\subcaption{Observable offer processes for $C_h^{\ordvvsucc}$ } 
		\vspace{1em}

			\begin{center}
					\begin{tabular}{ccccccc} \toprule[1.5pt]
					$\set \left( \op{w}^3 \right)$ 	&$R_h \left(\set \left( \op{w}^3 \right)\right)$  
					&$w_4$ &$R_h \left(\set \left( \op{w}^4 \right)\right)$ &$w_5$ 	&$R_h \left(\set \left( \op{w}^5 \right)\right)$  \\ \midrule
					$\{x_1, x_2, x_3\}$ 			&$\{x_3\}$ 												&$y_3$ 		&$\{x_2, x_3\}$		&$y_2$ 	&{\color[named]{NavyBlue}$\pmb{\{x_1, x_2, x_3\}}$}\\
					$\{x_1, x_2, y_3\}$ 			&$\{x_2\}$ 												&$y_2$		&{\color[named]{NavyBlue}$\pmb{\{x_1, x_2\}}$} \\
					$\{x_{1}, y_{2}, x_{3}\}$ 	&$\{x_1\}$  	\\ 
					$\{x_{1}, y_{2}, y_{3}\}$ 	&{\color[named]{NavyBlue}$\pmb{\{x_1\}}$}\\ 
					\bottomrule[1.5pt]
					\end{tabular} 
			\end{center}
		\subcaption{Observable offer processes for $C_h^{\vvvsucc}$ }
		\vspace{1em}
	\end{minipage} 
	\caption{Observable offer processes  in the proof of \Cref{thm:anti-nec}} \label{table:anti-nec}
\end{table}

Before we proceed, we make several observations on institution $h$'s choice functions.
In doing so, let $W$ denote the set of all observable offer processes in $\mathcal{M}$.
First, $C^0_h$ never rejects $y_2$, if offered, along any observable process; i.e.,
	\begin{align} \label{eq:almost:safe}
		\left[ \op{w} \in W \mbox{ and } y_2 \in \set (\op{w})  \right] \Rightarrow \left[ y_2 \in C^0_h (\set (\op{w})) \right].
	\end{align}
This can easily be confirmed with \Cref{table:not-sOS} above. 
Second, along any $\op{w} \in W$, $C^0_h$ and $C^{\vvsucc}_h$ disagree only when the menu is $\{ x_1, y_2, x_3\}$; i.e., 
	\begin{align} \label{eq:almost:diff1}
		\left[ \op{w} \in W \mbox{ and } C^0_h (\set (\op{w})) \neq C^{\vvsucc}_h (\set (\op{w})) \right] \Rightarrow 
		\left[\set (\op{w}) \cap X^0 = \{x_1, y_2, x_3\}\right].
	\end{align}
To see this, note that $C^0_h (M) \neq C^\vvsucc_h (M)$ only when $M\cap X^0 = \{x_1, y_2, x_3\}, \{x_1, x_2, y_2\}$, or $\{x_1,x_2, y_2, x_3\}$.
Among these possible menus, the second and third are unobservable as one can check with \Cref{table:not-sOS}.
Lastly, for any menu of contracts, $C^0_h$ and $C^{\vvvsucc}_h$ disagree only when the menu includes either $\{x_1, y_2, y_3\}$ or $\{y_2, x_3, y_3\}$. 
As a result, we have 
	\begin{align} \label{eq:almost:diff2}
		\left[ C^0_h (M) \neq C^{\vvvsucc}_h (M) \right] \Rightarrow \left[ \{y_2, y_3\} \subseteq M \right].
	\end{align}

To show that $\cop$ is weakly group strategy-proof in $\mathcal{M}$ by contradiction, suppose otherwise. 
That is, there are some $\ordsucc_D, \ordvarsucc_D \in \pdom$ such that 
	\begin{align} \label{eq:almost:violation}
		\cop \left( \ordvarsucc_D \right) \succ_d \cop \left( \ordsucc_D \right) \mbox{ for all } d \in E := \{ e\in D: \ordsucc_e \neq \ordvarsucc_e\} \neq \emptyset.
	\end{align}
Throughout the rest of the proof, let $\op{w}_\succ$ and $\op{w}_\varsucc$ be (arbitrary) complete offer processes at $\succ_D$ and $\varsucc_D$, respectively, in the market $\mathcal{M}$.  
We make two observations.
First,  $d_2$ cannot prefer $y_2$ to $x_2$ both at $\succ_{d_2}$ and at $\varsucc_{d_2}$, i.e.,
	\begin{align} \label{eq:almost:x2}
		[ y_2 \not\succ_{d_2} x_2] \mbox{ or }  [y_2 \ntriangleright_{d_2} x_2]
	\end{align}
for the following reasons:
Otherwise, neither $\op{w}_\succ$ nor $\op{w}_\varsucc$ contains $x_2$ because $C^0_h$ never rejects $y_2$ (\cref{eq:almost:safe}).
Then, the difference between $C^0_h$ and $C^{-x_2}_h$ is irrelevant along these processes, 
	and hence, we have both $\cop_{-x_2} ( \ordsucc_D) = \cop (\ordsucc_D)$ and  $\cop_{-x_2} ( \ordvarsucc_D) = \cop (\ordvarsucc_D)$. 
However, these equalities are incompatible with the assumption of \cref{eq:almost:violation} because $\cop_{-x_2}$ is weakly group strategy-proof as seen above.
Second, since $\cop_{\ordvvsucc}$ is also weakly group strategy-proof, 
	at least one of $\cop \left( \ordvarsucc_D \right) \neq \cop_{\ordvvsucc} \left( \ordvarsucc_D \right)$ and $\cop \left( \ordsucc_D \right) \neq \cop_{\ordvvsucc} \left( \ordsucc_D \right)$ must hold. 
For the rest of this proof, we establish a contradiction for each of these two cases.

\header{Case 1: ${\cop \left( \ordvarsucc_D \right) \neq \cop_{\ordvvsucc} \left( \ordvarsucc_D \right)}$.}

For $\cop \left( \ordvarsucc_D \right) \neq \cop_{\ordvvsucc} \left( \ordvarsucc_D \right)$,
	the two choice functions, $C_h^0$ and $C_h^\vvsucc$, must disagree on the choice from a subprocess, say $\widetilde{\op{w}}_\varsucc$, of $\op{w}_\varsucc$.
Then, $\set \left( \widetilde{\op{w}}_\varsucc \right) \cap X^0 = \{x_1, y_2, x_3\}$ by \cref{eq:almost:diff1},
	and moreover, institution $h$ should not receive any additional offers for the rest of the process.\footnote{
	Once $\{ x_1, y_2, x_3\}$ is reached, $d_1$ has no more contract to offer to $h$, and $d_2$ and $d_3$ hold $y_2$ and $x_3$.
	}
Thus, we obtain $\set (\op{w}_\varsucc) \cap X^0 = \{x_1, y_2, x_3\}$.
This entails $y_2 \varsucc_{d_2} x_2$, and hence, $y_2 \not\succ_{d_2} x_2$ should also hold by \cref{eq:almost:x2}.
These imply that $d_2$ is a deviator (i.e., a member of $E$ in \cref{eq:almost:violation}), and hence, $\cop \left( \ordvarsucc_D \right) \succ_{d_2} \cop \left( \ordsucc_D \right) $.
This, however, cannot be the case
	because $\cop \left( \ordsucc_D \right) \succeq_{d_2} y_2$ by $C^0_h$ never rejecting $y_2$, and 
	$y_2 \in C^0_h (\{x_1, y_2, x_3\}) \subseteq \cop (\varsucc_D)$.

\header{Case 2: ${\cop \left( \ordsucc_D \right) \neq \cop_{\ordvvsucc} \left( \ordsucc_D \right)}$.}

In this case, we obtain $\set (\op{w}_\succ) \cap X^0 = \{x_1, y_2, x_3\}$ as we did for $\op{w}_\varsucc$ in Case 1.
As a consequence, we also have $\{y_2, x_3\} \subseteq \cop \left( \ordsucc_D \right) $.
Two remarks are in order.
First, $y_3$ is not offered along $\op{w}_\succ$.
Second, $y_2$ cannot be offered along $\op{w}_\varsucc$ for the following reasons:
Since $x_2$ is not offered along $\op{w}_\succ$, we have $y_2 \succ_{d_2} x_2$.
By \cref{eq:almost:x2}, $x_2 \varsucc_{d_2} y_2$ is necessary. 
It follows that $d_2 \in E$ in \cref{eq:almost:violation}, and hence, $\cop \left( \ordvarsucc_D \right) \succ_{d_2} \cop \left( \ordsucc_D \right) $.
In light of $y_2 \in \cop \left( \ordsucc_D \right) $ and \cref{eq:almost:safe}, 
	this is possible only if $y_2$ is not offered along $\op{w}_\varsucc$.

Now we are ready to derive a contradiction. 
The above two observations imply that neither $\set (\op{w}_\succ)$ nor $\set (\op{w}_\varsucc)$ contains both $y_2$ and $y_3$. 
By \cref{eq:almost:diff2}, then, we should have $\cop \left( \ordsucc_D \right) = \cop_\vvvsucc \left( \ordsucc_D \right)$ and $\cop \left( \ordvarsucc_D \right) = \cop_\vvvsucc \left( \ordvarsucc_D \right)$.
Combined with \cref{eq:almost:violation}, this contradicts the weak group strategy-proofness of the COM in the market $\mathcal{M}^\vvvsucc$. 
\qed

\singlespacing
\bibliographystyle{ecta}
\bibliography{refall}
\onehalfspacing
\appendix

\crefalias{section}{appendix}
\newpage

		\section{More on the Characterizations of the COM} \label{sec:remarks}

In this appendix, we examine if we can replace weak Maskin monotonicity in \Cref{thm:uniqueness} with other related properties.
To begin, it is immediate that the theorem continues to hold with IR monotonicity (\Cref{cor:char:IR}).
In contrast, dropping monotonicity is insufficient to characterize the COM (\Cref{fact:char:dropping}), as we will show below.

\begin{cor} \label{cor:char:IR}
Let $C_H$ be a strongly OS profile of choice functions. 
Then, a mechanism is stable and IR monotonic if and only if it is the COM.
\end{cor}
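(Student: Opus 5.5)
The plan is to combine \Cref{thm:mon} and \Cref{thm:uniqueness} with the fact that IR monotonicity is logically stronger than weak Maskin monotonicity for arbitrary mechanisms.

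For the ``if'' direction, assume $C_H$ is strongly OS. Then the COM is stable: strong OS implies OS, which implies OSaA, and under OSaA the COM is stable \citep{refall:1188}. It is also IR monotonic by the implication (iv) $\Rightarrow$ (i) of \Cref{thm:mon}. So the COM is stable and IR monotonic.

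For the ``only if'' direction, let $f$ be stable and IR monotonic. I would first check that $f$ is weakly Maskin monotonic, using the comparison of transformations after \Cref{def:transformations}. Whenever each $\succ_d$ is a monotonic transformation of $\varsucc_d$ at $\xd{d}{f(\ordvarsucc_D)}$, it is also an IR monotonic transformation there, because the left-hand side of \cref{eq:def:IR} is contained in that of \cref{eq:def:weak-Maskin}. Hence the hypotheses of weak Maskin monotonicity imply those of IR monotonicity, and IR monotonicity then gives $f(\ordsucc_D)\succeq_d f(\ordvarsucc_D)$ for all $d$. Note that this argument is about general mechanisms and does not depend on the COM.

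With $f$ stable and weakly Maskin monotonic, \Cref{thm:uniqueness} yields $f=\cop$. No step here is a real obstacle. The only point that needs care is that the containment ``IR monotonic $\Rightarrow$ weakly Maskin monotonic'' holds for an arbitrary $f$, and not only for the COM where \Cref{thm:mon} makes the two equivalent. The subset relation between the two classes of transformations gives exactly this.
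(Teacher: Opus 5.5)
Your proposal is correct and matches the paper's proof: the COM is IR monotonic by \Cref{thm:mon}, and any stable IR monotonic mechanism is weakly Maskin monotonic (because monotonic transformations are IR monotonic transformations), so \Cref{thm:uniqueness} forces it to be the COM. Your explicit remarks on the COM's stability under OSaA, and on the IR-to-weak-Maskin implication holding for arbitrary mechanisms, simply spell out steps the paper leaves implicit.
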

\begin{proof}
The COM satisfies IR monotonicity by \Cref{thm:mon}. 
If a stable mechanism is IR monotonic, it must also be weakly Maskin monotonic and thus be the COM by \Cref{thm:uniqueness}.
\end{proof}

\begin{fact} \label{fact:char:dropping}
Let $C_H$ be a strongly OS profile of choice functions. 
Then, a non-COM mechanism can be stable and dropping monotonic.
\end{fact}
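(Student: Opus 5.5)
We prove \Cref{fact:char:dropping} by an explicit example in the classic one-to-one marriage market, embedded in matching with contracts. Let $D=\{d_1,d_2\}$ and $H=\{h_1,h_2\}$, with exactly one contract $x_{ij}$ for each pair $(d_i,h_j)$. Each $C_{h_j}$ is unit-demand and induced by a strict ranking over the agents. Such choice functions are substitutable in the sense of \citet{refall:641}, so the profile is strongly OS, and the COM coincides with agent-proposing deferred acceptance. The candidate mechanism $f$ is \emph{institution-proposing} deferred acceptance, i.e., $f(\ordsucc_D)$ is the institution-optimal stable allocation. It is stable by construction. It differs from the COM at a profile with cyclic preferences: take $d_1\colon x_{11}\succ x_{12}$, $d_2\colon x_{22}\succ x_{21}$, $h_1$ ranking $d_2$ above $d_1$, and $h_2$ ranking $d_1$ above $d_2$. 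Then $\{x_{11},x_{22}\}$ and $\{x_{12},x_{21}\}$ are both stable, and the COM selects the former while $f$ selects the latter.

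The main step is to show that $f$ is dropping monotonic. I will in fact show the stronger equality $f(\ordvardrp{D}{Z})=f(\ordvarsucc_D)$ whenever $Z\cap f(\ordvarsucc_D)=\emptyset$, using the form of dropping monotonicity in \cref{eq:alt-def:dropping}. The argument has three steps.
\begin{enumerate}
\item \emph{The old allocation stays stable.} Let $X=f(\ordvarsucc_D)$. Dropping $Z$ leaves every contract in $X$ acceptable, since $X\cap Z=\emptyset$. It also preserves agents' rankings among the remaining acceptable contracts, and it only makes $Z$ unacceptable. Any blocking pair at the new profile would therefore involve a contract the agent also prefers at the old profile, so it would block at the old profile too. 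Hence $X$ is stable at $\ordvardrp{D}{Z}$.
\item \emph{The new outcome is weakly better for institutions.} Since $f(\ordvardrp{D}{Z})$ is institution-optimal among stable allocations at the new profile and $X$ is one of them, every institution weakly prefers $f(\ordvardrp{D}{Z})$ to $X$.
\item \emph{The new outcome is weakly worse for institutions.} By the classic comparative-statics result for proposer-optimal stable matchings \citep[Gale--Sotomayor; see][]{refall:1163}, making some contracts unacceptable to the receiving side (here the agents) can only weakly hurt every proposer. Concretely, every rejection in institution-proposing DA at the old profile still occurs at the new profile, so the set of rejections only grows. Hence every institution weakly prefers $X$ to $f(\ordvardrp{D}{Z})$.
\end{enumerate}
Combining steps 2 and 3 with strict institution preferences, every institution is matched to the same agent under both allocations. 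With one contract per pair, this forces $f(\ordvardrp{D}{Z})=X$, so every agent weakly prefers the new outcome, as dropping monotonicity requires.

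I expect step 3 to be the main obstacle. It needs a careful statement of the "rejections only increase" argument, because $Z$ may delete contracts in the middle of an agent's list, not just a truncation. In the one-to-one setting, I would handle this by induction on the steps of institution-proposing DA, showing that any proposal rejected at the old profile is also rejected at the new profile. A possible shortcut is to drop one contract at a time, which keeps each step elementary.

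Finally, I would note for consistency with \Cref{thm:uniqueness} that $f$ must violate weak Maskin monotonicity. This is exactly the gap between dropping monotonicity and weak Maskin monotonicity that the remark after \Cref{thm:uniqueness} describes.
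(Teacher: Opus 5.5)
Your Step 3 fails as a general claim. It says that making contracts unacceptable to the receiving side (here the agents) can only weakly hurt every proposer, i.e.\ that every rejection in institution-proposing DA survives the drop. This is false for drops in the middle of a list. The restriction $Z\cap f(\ordvarsucc_D)=\emptyset$ does not rescue it, and neither does dropping one contract at a time.

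Here is a counterexample. Take unit-demand institutions $h_1,h_2,h_3$, with $h_1$ ranking $d_2$ over $d_1$ and $h_2,h_3$ ranking $d_1$ over $d_2$. Let $d_1\colon x_{11}\succ x_{13}\succ x_{12}\succ\varnothing$ and $d_2\colon x_{22}\succ x_{21}\succ\varnothing$, with $x_{23}$ unacceptable.
\begin{itemize}
\item In institution-proposing DA, $d_1$ first holds $h_3$ and rejects $h_2$. Then $h_2$ displaces $h_1$ at $d_2$, $h_1$ displaces $h_3$ at $d_1$, and $d_2$ rejects $h_3$. So $f(\ordvarsucc_D)=\{x_{11},x_{22}\}$.
\item Now drop $Z=\{x_{13}\}$, which is disjoint from this outcome. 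Then $d_1$ rejects $h_3$ and keeps $h_2$, $d_2$ keeps $h_1$, and the run ends at $\{x_{12},x_{21}\}$.
\end{itemize}
Institutions $h_1,h_2$ are strictly better off, contradicting Step 3. Both agents are strictly worse off, so institution-proposing DA is not dropping monotonic in general. The rejection of $h_2$ at the old profile was caused by the dropped institution $h_3$. That is exactly where your planned ``rejections only grow'' induction breaks down.

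Your specific two-by-two example nevertheless survives, but only because of its size. The failure above requires an agent, while temporarily holding the dropped institution, to reject some third institution on its account. With only $h_1,h_2$, the other institution is her final partner. It proposes to her once, so it cannot have been rejected. If the dropped institution ranks above her final partner, it never proposes to her at all.

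The clean way to close the gap is a direct check. In your market $f(\ordvarsucc_D)=\{x_{12},x_{21}\}$ whenever both contracts are acceptable, and each institution proposes at most twice. In each of the few remaining cases, dropping acceptable contracts outside $f(\ordvarsucc_D)$ leaves the outcome of the run unchanged. Replace Step 3 and the appeal to Gale--Sotomayor with that enumeration.

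Compare this with the paper's route. It uses one agent, one institution, two contracts, and a mechanism that departs from the COM at a single profile. Since the COM is itself dropping monotonic, only transformations involving that one profile need checking. Your route exhibits a natural stable mechanism, but its dropping monotonicity is an artifact of the small market rather than a general property.
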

\begin{proof}
The proof is by example. 
Let $D=\{d\}$, $H=\{h\}$, and $\gs = \{x,y\}$, where $x,y$ are two distinct contracts between $d$ and $h$.
Define $C_h$ to be the choice function induced by the following preference:
\begin{align*}
    \{x\} \succ_h \{y\} \succ_h \emptyset.
\end{align*}
In this market, $\{x\}$ and $\{y\}$ are stable if and only if $x$ and $y$ are acceptable for $d$, respectively.
Note also that the COM always chooses $d$'s best acceptable contract, if any.

Now we present a non-COM mechanism that is stable and dropping monotonic.
Define $f$ as follows:
    \begin{align*}
        f(\ordvarsucc_d) = 
        \begin{cases}
            \{x\} &\mbox{ if } \ordvarsucc_d = \ordsucc_d \\
            \cop(\ordvarsucc_d) &\mbox{ otherwise, }
        \end{cases}
    \end{align*}
where $\succ_d$ is the preference such that $y \succ_d x \succ_d \varnothing$.
The stability of $f$ is obvious from the above observation.
To check dropping monotonicity, it suffices to compare $\ordsucc_d$ with $\orddrp{d}{x}$ and $\orddrp{d}{y}$.
Then, since
    \begin{align*}
    f\left(\orddrp{d}{x}\right) = \{y\} \drp{d}{x} \{x\}
    &= f(\ordsucc_d), \mbox{ and } \\
    f\left(\orddrp{d}{y}\right) = \{x\}  &= f(\ordsucc_d)  ,
    \end{align*}
$f$ is indeed dropping monotonic, and the proof is complete.
\end{proof}

Next we turn to the \emph{invariance to lower-tail preference changes} and \emph{truncation-proofness}.
Under unilateral substitutability, \citet{refall:1189} shows that each of these axioms also characterizes the COM within stable mechanisms.
For the invariance to lower-tail preference changes, the same remains true under strong OS (\Cref{cor:char:weakMaskin}). 
However, the characterization with truncation-proofness does not extend to strong OS (\Cref{fact:char:truncation}):
As we establish below, the COM is not necessarily the unique stable mechanism that is truncation-proof.

\begin{definition}
Given an agent $d$, preference $\ordsucc_d$, and contract $w \in \gs_d$, define $U(\ordsucc_d, w) := \{w' \in \gs: w' \varsucc_d w \}$.
A preference $\ordsucc_d$ of agent $d$ is a \emph{lower-tail transformation} of $\ordvarsucc_d$ at $w \in \gs_d$
	if $U(\ordsucc_d, w) = U(\ordvarsucc_d,w)$ and 
	$[x \succ_d y \Leftrightarrow x \varsucc_d y]$ for all $x,y \in U(\ordsucc_d, w)$. 
A mechanism $f$ is \emph{invariant to lower-tail preference changes} 
	if $f(\ordsucc_D) = f\left( \ordvarsucc_D\right)$ holds 
	whenever $\ordsucc_D, \ordvarsucc_D \in \pdom$ are such that for each $d \in D$, $\ordsucc_d$ is a lower-tail transformation of $\ordvarsucc_d$ at $\xd{d}{f(\ordvarsucc_D)}$.
\whiteqed
\end{definition}

\begin{definition}
A preference $\varsucc_d$ of agent $d$ is a \emph{truncation} of $\succ_d$ if 
	$\Ac(\ordvarsucc_d) \subseteq \Ac(\ordsucc_d)$ and 
	$[x \varsucc_d y \Leftrightarrow x \succ_d y]$ for all $x,y \in \gs_d$.
A mechanism $f$ is \emph{truncation-proof} 
	if $f(\ordsucc_D) \succeq_d f\left( \ordvarsucc_d, \ordsucc_{-d}\right)$
	holds for any $\succ_D \in \pdom$, $d \in D$, and truncation $\varsucc_d$ of $\succ_d$.
\whiteqed
\end{definition}

\begin{cor} \label{cor:char:weakMaskin}
Let $C_H$ be a strongly OS profile of choice functions. 
Then, a mechanism is stable and invariant to lower-tail preference changes if and only if it is the COM.
\end{cor}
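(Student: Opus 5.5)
We prove the two directions separately. The ``if'' direction only needs to show that the COM itself is invariant to lower-tail preference changes, since its stability under strong OS is already known.

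\emph{Invariance of the COM.} Fix $\ordsucc_D,\ordvarsucc_D$ such that each $\ordsucc_d$ is a lower-tail transformation of $\ordvarsucc_d$ at $w_d:=\xd{d}{\cop(\ordvarsucc_D)}$. Take a complete offer process $\op{y}$ at $\ordvarsucc_D$. Under OS (a fortiori strong OS), a chosen contract is never rejected along $\op{y}$. Hence agent $d$ offers exactly the contracts in $U(\ordvarsucc_d,w_d)\cup\{w_d\}$, in $\ordvarsucc_d$-order, and never offers anything below $w_d$. If $d$ is unmatched, the process runs through all of $\Ac(\ordvarsucc_d)$.

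Because $U(\ordsucc_d,w_d)=U(\ordvarsucc_d,w_d)$ and the two orders agree on this set, the same sequence $\op{y}$ is a legitimate run of the COP at $\ordsucc_D$. Its offers are the same and its eligibility conditions are the same, since observability depends only on $C_H$. It is also complete at $\ordsucc_D$: every agent either holds a contract, or has offered all contracts she ranks above $\varnothing$.

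The unmatched case needs a little care. Here $w_d=\varnothing$, so the upper contour set of $\varnothing$ coincides, giving $\Ac(\ordsucc_d)=\Ac(\ordvarsucc_d)$. Therefore $\cop(\ordsucc_D)=C_H(\set(\op{y}))=\cop(\ordvarsucc_D)$.

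\emph{Uniqueness.} Let $f$ be stable and invariant to lower-tail preference changes. The plan is to show that $f$ is weakly Maskin monotonic and then invoke \Cref{thm:uniqueness}. I expect this step to be the main obstacle.

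A more robust route is to rerun the proof of \Cref{thm:uniqueness} directly. Take a minimal $\ordsucc_D$ with $f(\ordsucc_D)\neq\cop(\ordsucc_D)$, let $d^*$ sign $x$ under $f$ and $x^\star$ under $\cop$, and use \cref{eq:uni:2}. In each case, replace the appeal to weak Maskin monotonicity with an appeal to lower-tail invariance.

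\emph{Case $x^\star\succ_{d^*}x$.} Move $x$ and everything above it relative to $\ordsucc_{d^*}$ appropriately to build a lower-tail transformation at $x^\star=\xd{d^*}{f(\drp{D}{x})}$. The key observation is that $\drp{d^*}{x}$ and $\ordsucc_{d^*}$ differ only below $x^\star$. Hence $\ordsucc_{d^*}$ is a lower-tail transformation of $\drp{d^*}{x}$ at $x^\star$, so invariance forces $f(\ordsucc_D)=f(\drp{D}{x})\ni x^\star$. This contradicts $x\in f(\ordsucc_D)$.

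\emph{Case $x\succ_{d^*}x^\star$.} This is symmetric. The profile $\drp{d^*}{x^\star}$ differs from $\ordsucc_{d^*}$ only below $x$, so it is a lower-tail transformation at $x=\xd{d^*}{f(\ordsucc_D)}$. Invariance gives $f(\drp{D}{x^\star})=f(\ordsucc_D)\ni x$. But $f(\drp{D}{x^\star})=\cop(\drp{D}{x^\star})$, which is strictly worse than $x^\star$ for $d^*$ by the argument around \cref{eq:uni:3}, and hence worse than $x$. This is a contradiction.

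The delicate points are two. First, we must confirm that dropping a contract strictly below the assigned one is indeed a lower-tail transformation under the paper's definition of $U$; this works because the acceptability of lower contracts is unconstrained. Second, we must handle the equivalence-class caveat on dropped preferences, as in the footnote to \cref{eq:uni:2}.
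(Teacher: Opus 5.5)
Your proposal is correct and, once you drop the detour through weak Maskin monotonicity, it is the paper's own argument: rerun the proof of \Cref{thm:uniqueness} and note that both preference changes used there ($\ordsucc_{d^*}\to\drp{d^*}{x^\star}$ at $x$, and $\drp{d^*}{x}\to\ordsucc_{d^*}$ at $x^\star$) are lower-tail transformations at the relevant assignment, so each exhibited violation of weak Maskin monotonicity already violates lower-tail invariance. Your direct check that the COM is itself invariant (the same complete offer process remains a legitimate complete run) supplies the ``if'' direction that the paper leaves implicit; note only that this check relies on the evidently intended reading of $U$ as an upper contour set in $\gs\cup\{\varnothing\}$, since under the literal reading $\xd{d}{\cop(\ordvarsucc_D)}$ could become unacceptable and the invariance would fail.
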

\begin{proof}
This is a corollary of our \Cref{thm:uniqueness} because in its proof, each violation of weak Maskin monotonicity also violates the invariance to lower-tail preference changes.
\end{proof}

\begin{fact} \label{fact:char:truncation}
	Let $C_H$ be a strongly OS profile of choice functions. 
	Then, a non-COM mechanism can be stable and truncation-proof. 
\end{fact}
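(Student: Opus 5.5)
The statement is existential, so I will prove it by exhibiting one small market. The simplest candidate, the one-agent market in the proof of \Cref{fact:char:dropping}, does not work. Both $\{x\}$ and $\{y\}$ are stable when $y \succ_d x \succ_d \varnothing$. But any non-COM stable mechanism that assigns $\{x\}$ there is beaten by the truncation that keeps only $y$, under which the only stable allocation is $\{y\}$.

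The example therefore needs at least two agents. The deviating agent's better COM contract must be protected only by the presence of another agent's contract, so that truncating cannot secure it. I expect this feature to rely on strong OS falling short of unilateral substitutability, since under unilateral substitutability the characterization by truncation-proofness holds \citep{refall:1189}. Accordingly, I would start from a single institution $h$ and agents $d_1,d_2$ (possibly with a third agent as in \Cref{ex:not-sOS}). I would give $h$ a choice function induced by a preference over subsets. Its pattern is that $h$ takes $d_1$'s preferred contract only jointly with a particular contract of $d_2$, and otherwise prefers $d_1$'s less preferred contract.

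The construction has four steps.
\begin{enumerate}
\item Verify, by tabulating all observable offer processes as in \Cref{table:not-sOS}, that $C_h$ is strongly OS (and OSaA).
\item Fix a profile $\ordsucc_D$ at which at least two stable allocations exist: the COM outcome $A^\star=\cop(\ordsucc_D)$ and another stable allocation $A$. Choose $\ordsucc_D$ so that $A$ is worse than $A^\star$ for some agent, say $d_1$, and preferably better for $d_2$.
\item Define $f(\ordsucc_D):=A$ and $f:=\cop$ at every other profile. Stability of $f$ is then immediate.
\item Check truncation-proofness. At profiles $\ordvarsucc_D \neq \ordsucc_D$, truncation-proofness of $f$ involves comparisons with $\cop$ and with the single exceptional profile.
\end{enumerate}

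Two kinds of deviation need checking in step 4.
\begin{itemize}
\item Deviations away from $\ordsucc_D$. For each agent $d$ and each truncation $\ordvarsucc_d$ of $\ordsucc_d$, I would compute $\cop(\ordvarsucc_d,\ordsucc_{-d})$ by running the COP and confirm it is not better for $d$ than $A$ under $\ordsucc_d$. For $d_2$ this is automatic if $A$ gives $d_2$ its top contract. For $d_1$ the design must ensure that once $d_1$ drops its lower contract, $h$ no longer holds $d_1$'s preferred contract. The intended mechanism is that $d_2$'s offers then change and the complementarity disappears, so $d_1$ ends up no better than in $A$.
\item Deviations into $\ordsucc_D$ from a profile whose truncation is $\ordsucc_D$. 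Here I need $A$ not to be better for the deviator than the COM outcome at the original profile. Since $A$ is worse than $A^\star$ for $d_1$, the delicate case is $d_2$. I would either arrange $d_2$'s preference in $\ordsucc_D$ to be untruncatable in a relevant way, or verify by truncation-proofness of $\cop$ (which holds under strong OS with the chosen structure) that the original COM outcome already gives $d_2$ at least what $A$ does.
\end{itemize}

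The main obstacle is step 4 for $d_1$: finding a strongly OS choice function with multiple stable allocations in which the harmed agent cannot reach its COM contract by truncating. I would try first a modification of $\succ_h$ from \Cref{ex:not-sOS}, promoting a doubleton as in $\vvsucc_h$ so that strong OS is restored while the complementarity between $d_1$'s and $d_2$'s contracts survives. I would then enumerate the finitely many truncations directly.
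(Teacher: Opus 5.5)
\textbf{Overall structure.} Your skeleton matches the paper's. You perturb the COM at one profile $\ordsucc_D$ where a second stable allocation $A$ exists. Elsewhere you rely on the COM's own truncation-proofness, which follows from IR monotonicity in \Cref{thm:mon}. You also make $\ordsucc_D$ unreachable by any nontrivial truncation; the paper does this by making every contract acceptable there, which is your ``untruncatable'' idea.

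\textbf{First gap: no example.} You never exhibit the market. For an existential claim, the example is the entire proof.

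\textbf{Second gap: the design you describe cannot work.} You want $A$ to give $d_1$ her lower contract $\ell$, while $A^\star=\cop(\ordsucc_D)$ gives her $p \succ_{d_1} \ell$. You hope that once $d_1$ drops $\ell$, the COP changes so that $h$ stops holding $p$. Under OS this never happens, for the following reasons.
\begin{enumerate}
\item Since $p \in \cop(\ordsucc_D)$, $p$ is never rejected along the COP at $\ordsucc_D$, because rejections are monotone along a path.
\item Hence $d_1$ never offers $\ell$ or anything below $p$.
\item So the same complete process is also complete at the truncation of $\ordsucc_{d_1}$ just below $p$. This is \Cref{lem:weak:COP}\ref{sublem:weak:COP:unpref} applied repeatedly.
\item That truncated profile differs from $\ordsucc_D$, so $f=\cop$ there, and $d_1$ obtains $p \succ_{d_1} \ell$. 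This is a profitable truncation.
\end{enumerate}
The same argument shows that $A$ can never leave an agent who is matched under $A$ strictly worse off than under $A^\star$. The only loophole is an agent unmatched under $A$ whose COM contract is her least acceptable one. So the ``complementarity disappears, $d_2$'s offers change'' mechanism you rely on is already ruled out by OS.

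\textbf{The missing idea.} Go the other way. Under strong OS the COM need not be agent-optimal among stable allocations. So pick $A$ that weakly Pareto-dominates $A^\star$ for the agents, and take $\ordsucc_D$ with every contract acceptable. Then:
\begin{itemize}
\item Truncations away from $\ordsucc_D$ are unprofitable, because $f(\ordsucc_D)\succeq_d \cop(\ordsucc_D)\succeq_d \cop(\ordvarsucc_d,\ordsucc_{-d})$.
\item No nontrivial truncation leads into $\ordsucc_D$.
\end{itemize}

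\textbf{The paper's example.} It uses two-slot slot-specific priorities rather than a modification of \Cref{ex:not-sOS}, with $z_1 \succ_{h1} x_3 \succ_{h1} x_2 \succ_{h1} x_1$ and $y_2 \succ_{h2} y_1 \succ_{h2} x_2 \succ_{h2} x_1$. The profile is all-acceptable, with $d_1$ ranking $x_1$ over $y_1$ over $z_1$ and $d_2$ ranking $x_2$ over $y_2$.
\begin{itemize}
\item The COM yields $\{z_1,y_2\}$, whereas $\{z_1,x_2\}$ is stable and gives $d_2$ her top contract.
\item $d_1$ cannot be matched at all without listing $z_1$, so her truncations are useless.
\item $d_3$'s only truncation is to the empty list.
\end{itemize}
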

\begin{proof}
The proof is by example. 
Let $D = \{d_1, d_2, d_3\}$, $H=\{h\}$, $\gs = \{x_1, y_1, z_1, x_2, y_2, x_3\}$, where $\dx{x_1}=\dx{y_1}=\dx{z_1} = d_1$, $\dx{x_2}=\dx{y_2}=d_2$, and $\dx{x_3}=d_3$.
Suppose that $h$ has two slots and that its choice function $C_h$ is induced by \emph{slot-specific priorities} \citep{refall:1158}:
Let $\succ_{h1}$ and $\succ_{h2}$ be preference relations over contracts such that 
	\begin{align*}
		z_1 &\succ_{h1} x_3 \succ_{h1} x_2 \succ_{h1} x_1  \succ_{h1} \varnothing, \mbox{ and }\\
		y_2 &\succ_{h2} y_1 \succ_{h2} x_2 \succ_{h2} x_1  \succ_{h2}  \varnothing,
	\end{align*}
	where $\varnothing$ represents vacancy. 
From each menu of contracts, 
	the first slot chooses the best contract according to $\succ_{h1}$,
	and then, the second slot chooses the best contract based on $\succ_{h2}$ from the feasible remaining contracts. 
Note that the second slot cannot choose a contract involving the same agent as of the first slot's choice. 
When the menu is $\{y_1,z_1, x_2\}$, for instance,
	not only $z_1$ but also $y_1$ becomes infeasible after the first slot chooses $z_1$;
	thus, $C_h (\{y_1,z_1, x_2\}) = \{z_1,x_2\}$ in this case. 
Any choice function induced by slot-specific priorities satisfies strong OS.\footnote{%
	Slot specific priorities imply substitutable completability \citep{refall:1191}, which in turn implies strong OS \citep{refall:1401}.
	}
Hence, the COM is well-defined, stable, and IR monotonic in this market.

Next, we define a non-COM mechanism that we will show is also stable and truncation-proof. 
To do so, define $\vvsucc_D = (\ordvvsucc_{d_1}, \ordvvsucc_{d_2}, \ordvvsucc_{d_3})$ to be the preference profile such that 
	\begin{align*}
		x_1 \vvsucc_{d_1} y_1  \vvsucc_{d_1} z_1  &\vvsucc_{d_1} \varnothing,\\
		x_2 \vvsucc_{d_2} y_2  &\vvsucc_{d_2} \varnothing, \mbox{ and } \\ 
		x_3 &\vvsucc_{d_3} \varnothing.
	\end{align*}
Referring to this particular profile, define a mechanism $f$ as follows: For each $\ordsucc_D \in \pdom$, 
	\begin{align*}
		f (\ordsucc_D) = 
		\begin{cases}
			\{z_1, x_2\}		&\mbox{ if } \ordsucc_D = \ordvvsucc_D, \\
			\cop(\ordsucc_D) &\mbox{ otherwise. }
		\end{cases}
	\end{align*}
Notice that the COM outcome at $\ordvvsucc_D$ is $\{z_1, y_2\}$; hence, $f$ is not the COM. 
Further, $f (\ordvvsucc_D) = \{z_1, x_2\}$ is a stable allocation at $\ordvvsucc_D$; hence, $f$ is a stable mechanism.
In what follows, we confirm that $f$ is also truncation-proof.

Towards a contradiction, suppose that $f$ violates truncation-proofness; i.e., there exists $\ordsucc_D, \ordvarsucc_D \in \pdom$ such that 
	\begin{itemize}
		\item $f (\ordvarsucc_D) \succ_{d^*} f (\ordsucc_D)$ for some $d^* \in D$, whereas
		\item $\ordvarsucc_{d^*}$ is a truncation of $\ordsucc_{d^*}$, and $\ordvarsucc_d = \ordsucc_d$ for all $d \neq d^*$. 
	\end{itemize}
Then, either $\ordsucc_D$ or $\ordvarsucc_D$ must be $\ordvvsucc_D$ because, as mentioned above, the COM is IR monotonic in this market.
Obviously, $\ordvarsucc_D = \ordvvsucc_D$ is impossible because every contract is acceptable at $\ordvvsucc_D$ (and hence, it cannot be a truncation of any other). 
Thus, the only remaining possibility is $\ordsucc_D = \ordvvsucc_D$.
For $d_1$, it is impossible to benefit with a truncation 
	because given the preferences of the others, $d_1$ cannot obtain a non-null contract without reporting $z_1$ as acceptable.
For $d_2$, truncation cannot be profitable because she is assigned the first-best contract at $f(\ordvvsucc_D)$.
For $d_3$, the only truncation $\ordvarsucc_{d_3}$ is such that $\Ac (\ordvarsucc_{d_3})=\emptyset$, which is clearly unprofitable.
In summary, starting from $\ordvvsucc_D$, no agent can profitably manipulate $f$ via truncation; the mechanism $f$ is thus truncation-proof.
\end{proof}

		
		\section{Omitted Proofs} \label{sec:omitted_proofs}


In this appendix, we provide the proofs of \Crefrange{lem:sOS}{lem:non-bossy:OS+SP}.
All of those proofs are reproduced from \citet{refall:1401} for the reader's convenience only.
As such, they should not count as a part of the contribution of the present paper.

\subsection[Proof of \Cref{lem:sOS}]{Proof of \Cref{lem:sOS}} \label{sec:proof:lem:sOS}

	Suppose $\Delta_R := R_H (\set(\op{x})) - R_H (\set(\op{y}))$ is non-empty.
	Then, there exists the first step $n$ 
		at which any contract in $\Delta_R$ is rejected during the process 	$\op{x} = (x_1, \dots, x_n)$;
		that is, $n$ is such that $R_H (\{x_1, \dots x_{n-1}\}) \cap \Delta_R = \emptyset$ and $R_H (\{x_1, \dots, x_n\}) \cap \Delta_R \neq \emptyset$.
	The latter implies $R_H  (\{x_1, \dots, x_n\}) \not\subseteq R_H (\set ( \op{y}))$, 
			which further entails  $\{x_1, \dots, x_n\} \not\subseteq \set (\op{y})$ by the assumption of strong OS.
	That is, there exists $k \leq n$ such that $x_k \in \{x_1, \dots, x_n\} - \set (\op{y})$. 
		
	Now let $x^* = x_k$ and $d = \dx{x^*}$.
	If $x^* \in \Ac ( \ordvarsucc_d)$, then $d$ should sign some (non-null) contract $y^* \varsucc_d x^*$ at $\COP (\ordvarsucc_D, C_H)$;
			otherwise, $x^* \not\in \set (\op{y})$ contradicts the assumption that $\op{y}$ is complete at $( \ordvarsucc_D, C_H)$.
	Furthermore, $y^* \succ_d x^*$ is impossible for the following reasons:
			If $y^* \succ_d x^*$, then $y^*$ must be offered and rejected before $x^* = x_k$ is offered at step $k < n$ of the process $\op{x}$. 
	By the assumption of (strong) OS, it then follows that $y^* \in R_H (\{x_1, \dots, x_{n-1}\}) \subseteq R_H (\set (\op{x}))$, 
		which further entails $y^* \in R_H (\{x_1, \dots, x_{n-1}\}) \cap \Delta_R$ since $y^* \not\in R_H (\set (\op{y}))$ by its definition.
	This, however, contradicts the definition of $n$.
	As we have shown that $x^* \in \Ac ( \ordvarsucc_d)$ implies $y^* \varsucc_d x^*$ and $y^* \not\succ_d x^*$, the proof is complete.
\qed

\subsection{Proof of \Cref{lem:weak:COP}}
			To prove part \ref{sublem:weak:COP:unpref}, 
				suppose $\dx{w}$ signs a non-null contract $z$ at $\cop ( \ordvarsucc_D)$ and that $z \succ_{\dx{w}} w$.
			Let $\op{y} = (y_1, \dots, y_T)$ be a complete offer process at $ \left( \ordvarsucc_D, C_H \right)$.
			Then, $\set (\op{y})$ cannot contain $w$ for the following reasons:
				For $w$ to be offered, $z$ must be rejected beforehand. Under the assumption of OS, then, $z$ must be also rejected from $\set (\op{y})$, which contradicts its definition.
			Given $w \not\in \set (\op{y})$, it is immediate to see that $\op{y} = (y_1, \dots, y_T)$ is also complete at $ \left( \ordvardrp{D}{w}, C_H \right)$ and hence, 
				$\cop \left( \ordvardrp{D}{w} \right) = \cop ( \ordvarsucc_D)$.
				
			To prove part \ref{sublem:weak:COP:chosen}, 
				suppose $w \in \cop ( \ordvarsucc_D)$ and that $\op{y} = (y_1, \dots, y_T)$ is a complete offer process at $ \left( \ordvarsucc_D, C_H \right)$.
			Obviously, there exists some $t$ such that $y_t = w$. 
			By rerunning the cumulative offer process from step $t$ with $\ordvardrp{D}{w}$, 
				then, we can obtain an offer process $\op{y}' = \left( y_1, \dots, y_{t-1}, y'_t, \dots, y'_{T'} \right)$ that is complete at $ \left( \ordvardrp{D}{w}, C_H \right)$.
			By definitions, any contract better than $w$ for $\dx{w}$, with respect to either $\ordvarsucc_{\dx{w}}$ or $\ordvardrp{\dx{w}}{w}$, must be an element of and be rejected from $\{y_1, \dots y_{t-1}\}$.
			Under the assumption of OS, it must be also rejected from $\set (\op{y}')$.
			Therefore, $\cop ( \ordvarsucc_D) \varsucc_{\dx{w}} \cop \left( \ordvardrp{D}{w} \right)$.

			To prove part \ref{sublem:weak:COP:least}, 
				suppose $w' \varsucceq_{\dx{w}} w$ for all $w' \in \Ac \left( \ordvarsucc_{\dx{w}} \right)$. 
			If $\dx{w}$ signs a non-null contract $w'$ at $\cop \left( \ordvardrp{D}{w} \right)$, 
				then, $\cop \left( \ordvardrp{D}{w} \right) = \cop (\ordvarsucc_D)$ holds
				because any complete offer process at $\ordvardrp{D}{w}$ is also complete at $\ordvarsucc_D$ as in the proof of part \ref{sublem:weak:COP:unpref}.
			To complete the proof, 
					suppose that $\dx{w}$ signs no non-null contract at $\cop \left( \ordvardrp{D}{w} \right) = C_H \left(\set (\op{y}^-) \right)$.
			Let $\op{y}^- = \left( y^-_1, \dots, y^-_{T} \right)$ be a complete offer process at $(\ordvardrp{D}{w}, C_H)$.
			Then, we can restart the cumulative offer process from step $T+1$ by letting $\dx{w}$ offer $y''_{T+1} = w$, 
				so as to obtain an offer process \begin{align*}\op{y}'' = \left( y^-_1, \dots, y^-_T, y''_{T+1}, \dots, y''_{T''} \right)\end{align*} that is complete at $ \left( \ordvarsucc_D, C_H \right) $.
			By the OS assumption, any contract rejected from $\set (\op{y}^-)$ must be also rejected from $\set (\op{y}'')$.
			For any $d \neq \dx{w}$, thus, $\cop \left( \ordvardrp{D}{w} \right) \varsucceq_d \cop (\ordvarsucc_D)$.
\qed

\subsection{Proof of \Cref{lem:non-bossy:OS+SP}}
		Taking $d$ as arbitrarily fixed, suppose towards a contradiction that 
			$(\ordsucc_D, \ordvarsucc_D)$ is a counterexample; i.e., 
			the three conditions on $\ordsucc_D$ and $\ordvarsucc_D$ are satisfied while $\cop(\ordvarsucc_D) \neq \cop (\ordsucc_D)$.
		Without any loss of generality, suppose further that it is ``minimal'' in the following sense:
		For any other counterexample $(\ordsucc'_D, \ordvarsucc'_D)$, 
			\begin{align} \label{eq:lem:non-bossy}
			\min \left\{ \sum\limits_{d' \in D} \left| \Ac \left( \ordsucc'_{d'} \right) \right| , \sum\limits_{d' \in D} \left| \Ac\left (\ordvarsucc'_{d'} \right) \right| \right\}
			\ \geq \  
			\min \left\{ \sum\limits_{d' \in D} \Bigl| \Ac \left(\ordsucc_{d'} \right) \Bigr| , \sum\limits_{d' \in D}  \Bigl| \Ac \left(\ordvarsucc_{d'} \right) \Bigr| \right\}.
			\end{align}
		To complete the proof, then,	it suffices to construct a non-empty $Y$ such that 
			$(\ordsucc'_D, \ordvarsucc'_D) = \left( \orddrp{D}{Y}, \ordvardrp{D}{Y} \right)$ forms a counterexample violating this inequality.

		To begin with, note that $\cop (\ordvarsucc_D) =_d \cop (\ordsucc_D)$ should hold by the assumption of strategy-proofness:
		If $\cop (\ordvarsucc_D) \succ_d \cop(\ordsucc_D)$, then $d$ would have an incentive to report $\ordvarsucc_d$ when the true preference is $\ordsucc_d$.
		If $\cop (\ordsucc_D) \succ_d \cop (\ordvarsucc_D)$, then 
			the second condition in the statement of this lemma implies $\cop (\ordsucc_D) \varsucc_d \cop (\ordvarsucc_D)$.
		Thus, $d$ could benefit by reporting $\ordsucc_d$ when the true preference is $\ordvarsucc_d$.

		Next, we confirm that there should be some $d^* \in D$ who signs distinct non-null contracts at $\cop (\ordsucc_D)$ and $ \cop ( \ordvarsucc_D)$; i.e., 
			there should exist $x^*_\succ \in \cop (\ordsucc_D)$ and $x^*_\varsucc \in \cop(\ordvarsucc_D)$ such that $\dx{x^*_\succ} = \dx{x^*_\varsucc} = d^*$ and $x^*_\succ \neq x^*_\varsucc$.
		By Lemma 2 of \citet{refall:1324}, such $d^*$ is guaranteed to exist if $\cop (\ordsucc_D)$ is stable at $\left(\ordvarsucc_D, C_H \right)$.		
		For some $(h, X')$ to block $\cop(\ordsucc_D)$ at $(\ordvarsucc_D, C_H)$ but not at $(\ordsucc_D, C_H )$, 
			we must have $C_h (X') \varsucc_{d'} \cop (\ordsucc_D) \succ_{d'} C_h (X') $ for some $d' \in D$.
		However, this is clearly impossible under our assumptions;
			whether $d' = d$ or not,
			$\ordsucc_{d'}$ and $\ordvarsucc_{d'}$ share the upper contour set of (the contract $d'$ signs at) $\cop (\ordsucc_D)$.
		Therefore, $\cop (\ordsucc_D)$ is stable at $\left(\ordvarsucc_D, C_H \right)$ and $d^*$ should exist.
		Note that $d^* \neq d$ and thus $\ordsucc_{d^*} = \ordvarsucc_{d^*}$, because $d$ must be indifferent between $\cop (\ordvarsucc_D)$ and $\cop (\ordsucc_D)$ as seen above.

		Now, suppose for a moment that $x^*_\succ \succ_{d^*} x^*_\varsucc$ and let 
			\begin{align*}
				Y := \left\{ y \in \gs: \dx{y} = d^* \mbox{ and } x^*_\succ \succ_{d^*} y \right\} \ni x^*_\varsucc.
			\end{align*}
		Note that the contracts in $Y$ are never offered along the cumulative offer process at $\ordsucc_D$ even if they are acceptable.
		We thus have $ \cop\left( \orddrp{D}{Y} \right) = \cop(\ordsucc_D) $, which further leads to two observations:
		First, it is immediate to check that $\left( \orddrp{D}{Y}, \ordvardrp{D}{Y}\right)$ meet the three conditions in the statement of this lemma.
		Second, it also follows that $\cop \left( \orddrp{D}{Y} \right)  \neq \cop \left( \ordvardrp{D}{Y} \right) $ for the following reasons:
				On the one hand, $x^*_\succ \in  \cop \left( \orddrp{D}{Y} \right)$, because $x^*_\succ \in \cop( \ordsucc_D)$ by definition and $\cop (\ordsucc_D) = \cop \left( \orddrp{D}{Y} \right) $ as seen above.
				On the other hand, strategy-proofness implies $ x^*_\succ \not\in \cop \left( \ordvardrp{D}{Y} \right)$, 
						as otherwise $d^*$ can profitably manipulate by reporting $\ordvardrp{d^*}{Y}$ when the true preference is $\ordvarsucc_{d^*}$.\footnote{
						Notice that $x^*_\succ \varsucc_{d^*} x^*_\varsucc$ follows from $x^*_\succ \succ_{d^*} x^*_\varsucc$, since $\ordvarsucc_{d^*} = \ordsucc_{d^*}$ as we have mentioned above.
						}
		That is, $\left( \orddrp{D}{Y}, \ordvardrp{D}{Y}\right)$ constitutes a counterexample to the claim of this lemma.
		This, however, contradicts the minimality assumption we have imposed on $\left( \ordsucc_D, \ordvarsucc_D \right)$:
		Since $d^* \neq d$ as seen above, we have $\ordsucc_{d^*} = \ordvarsucc_{d^*} $, and hence,
			$\left| \Ac \left(\orddrp{d^*}{Y}\right) \right| = \left| \Ac \left(\ordvardrp{d^*}{Y}\right) \right| $ is strictly smaller than $| \Ac (\ordsucc_{d^*}) | = | \Ac (\ordvarsucc_{d^*})| $. 
		For any $d' \neq d^*$, $\orddrp{d'}{Y} = \ordsucc_{d'}$ and $\ordvardrp{d'}{Y} = \ordvarsucc_{d'}$.
		Thus, $(\ordsucc'_D, \ordvarsucc'_D) = \left( \orddrp{D}{Y}, \ordvardrp{D}{Y}\right)$ violates inequality (\ref{eq:lem:non-bossy}).
				
		The case of $x^*_\varsucc \succ_{d^*} x^*_\succ$ is perfectly symmetric with $Y :=\{ y \in \gs: x^*_\varsucc \succ_{d^*} y \}$, and the proof is complete.
\qed

\end{document}